\newcommand{\NB}{\mathit{NB}}
\theoremstyle{definition}
\newtheorem{definition}{Definition}
\newtheorem{fact}{Fact}
\newtheorem{example}{Example}
\newtheorem{theorem}{Theorem}
\newtheorem{remark}{Remark}
\newtheorem{prop}{Proposition}
\theoremstyle{plain}
\newtheorem{problem}{Problem}
\begin{document}
%
\title{Supervisory Control Theory with Event Forcing}
%
%
%

\author{Michel~Reniers,~\IEEEmembership{Senior Member,~IEEE,}
        and Kai~Cai,~\IEEEmembership{Senior Member,~IEEE}
\thanks{M. Reniers is with the Department
of Mechanical Engineering, Eindhoven University of Technology, Eindhoven, The Netherlands. E-mail: m.a.reniers@tue.nl.}
\thanks{K. Cai is with Department of Core Informatics, Osaka Metropolitan University, Osaka, Japan. E-mail: cai@omu.ac.jp.}
}

\maketitle

\begin{abstract}
In the Ramadge-Wonham supervisory control theory the only interaction mechanism between supervisor and plant is that the supervisor may enable/disable events from the plant and the plant makes a final decision about which of the enabled events is actually taking place. 
In this paper, the interaction between supervisor and plant is enriched by allowing the supervisor to force specific events (called forcible events) that are allowed to preempt uncontrollable events. 
A notion of {\em forcible-controllability} is defined that captures the interplay between controllability of a supervisor w.r.t.\ the uncontrollable events provided by a plant in the setting with event forcing. 
Existence of a maximally permissive, forcibly-controllable, nonblocking supervisor is shown and an algorithm is provided that computes such a supervisor.
The approach is illustrated by two small case studies.
\end{abstract}

\begin{IEEEkeywords}
Discrete event systems, finite automata, forcible events, forcible controllability supervisory control, nonblocking.
\end{IEEEkeywords}

%
\IEEEpeerreviewmaketitle

\section{Introduction}
\label{section:Introduction}

\IEEEPARstart{T}{he} main interaction mechanism between a supervisor and the plant it controls in supervisory control theory (SCT) \cite{bookcai,cassandras2009introduction} is the mechanism of enabling/disabling events offered by the plant. In such a setting the supervisor indicates which of the events that are enabled in the plant are allowed to occur, but the supervisor does not dictate which of these allowed events will occur. 

One can observe that in many implementations of supervisory controllers a different interaction involving event forcing is used between supervisor and plant.

We mention two important consequences of this `mismatch' between this assumption in SCT and the actual setting in implementations: (1) a proper supervisory controller does not necessarily result in a proper implementation (especially w.r.t.\ the infamous property of nonblocking there are complications), and (2) loss of permissiveness because final choice of executed event is assumed to be in the plant. In \cite{malik2002generating,Reijnen19,Reijnen2022}, the mismatch between a proper supervisory controller and an implementation is discussed in detail.

In this paper we study the second consequence in more detail. We propose a SCT that does not only allow enabling/disabling by the supervisor, but also, for some events, allows forcing of such an event. Forcing of an event results in preemption of other events and may thus contribute to obtaining proper supervisors. 

In \cite{Golaszewski1987}, supervisory control of untimed discrete event systems with forcible events is studied. In \cite[Example 2]{Golaszewski1987}, a version of controllability was introduced  (when interpreted in a setting with uncontrollable events) which requires that a sublanguage is closed under all uncontrollable events available in the plant or allows for a single forced event only. A clear difference with our definition is that we generally allow multiple forcible events to be available for the sake of maximal freedom of choices. In \cite{Golaszewski1987}, this specific condition that only one forcible event is considered an acceptable continuation is not further elaborated outside that example.

In \cite[section 3.8]{bookcai} forcible events are introduced also for a setting without timing. It is argued (informally) that forcible events are a modeling issue by showing how synthesis in the context of forcible events can be achieved by traditional synthesis on a transformed plant.
Drawbacks of this approach are the possible increase of the number of states in the state space on which synthesis is to be performed, and the fact that the transformation as suggested takes place on the state space and not on the individual plant components typically used in modeling an uncontrolled system. 
This paper considers the same setting, but studies a direct synthesis algorithm for the setting of supervisory control with forcible events (thus countering both disadvantages).

In extensions of SCT with notions of timing, such as different types of Timed DES \cite{brandin1994supervisory,ZhaCaiWon13,TakaiUshio06,rashidinejad2018,Rashidinejad2020a} and TA \cite{Rashidinejad2020b}, forcible events are introduced to preempt progress of time (which is generally considered uncontrollable). In all these works, the result is an adapted notion of controllability where progress of time is consider uncontrollable only in cases where no forcible events are enabled. A main difference with our approach is that we consider the status of every event as being either controllable or uncontrollable as fixed and given. Additionally, events may be forcible, which allows them to be used by the supervisor to preempt events (not only progress of time, but also other events).

In \cite{brandin1994supervisory}, a distinction is made between strongly preemptive forcing, where a forcible event preempts any other eligible event, and weakly preemptive forcing, where premption is only assumed w.r.t.\ the passage of time. 
In \cite{ZhaCaiWon13,TakaiUshio06}, weakly preemptive forcing is considered. 
In contract, the notion of preemption adopted in our paper is that of strong preemption with the side note that a supervisor may have multiple forcible event alternatives instead of exactly one.
A notion of controllability is defined that depending on the eligibility of a forcible event treats the event $\mathit{tick}$ as uncontrollable or not (uncontrollable when no forcible event is eligible). 

In \cite{rashidinejad2018}, the supervisory control of a plant is studied where the interaction between supervisor and plant is assumed to result in delays. Time is modeled by means of a tick event, and forcible events are used to preempt this tick event (and no other events). The notion of controllability used in \cite{rashidinejad2018} aligns well with the notion defined in this paper. Although the synthesis algorithm presented in \cite{rashidinejad2018} is quite different from the version in this paper, the underlying idea that a tick event may be preempted by a forcible event is recognizable and also the situation that during synthesis all forcible events may disappear resulting in these states being bad states after all is there.

In both \cite{Rashidinejad2020a} and \cite{Rashidinejad2020b}, forcible events are used to preempt the progress of time in a context of timed automata. In such automata the progress of time is, in contrast with Timed DES, described by real-valued clocks and separated from the occurrence of events.

In \cite{Ushio2005}, forcible events are used in the context of hybrid automata. The treatment is restricted to the setting where all forcible events are also controllable, an assumption that is not adopted in the present paper.

In \cite{Huang2008}, existence and synthesis of safe and nonblocking directed control for discrete event systems is studied. A directed controller selects at most one controllable event to be executed from each system state. Our forcibly-controllable supervisor is possibly more permissive than a directed controller as multiple controllable events may still be allowed in order to maximize permissiveness in the sense of choosing alternative forcible events.

In \cite{Reijnen19}, supervised control is applied. In supervised control a supervisor is meant to be implemented
together with a separate controller. The supervisor
is used to monitor the behavior of the plant and disables some events, whereas the controller forces some of the events enabled by the supervisor to occur in the plant. Forcible events are used in \cite{Reijnen19}, but not for the synthesis of a maximally permissive supervisor, but for the implementation of such a supervisor as a controller.
In all the four case studies mentioned in the paper all controllable events were assumed to be forcible by the controller.

In \cite{Balemi93}, the authors point to three reasons for limited applicability of supervisory control theory. One of these three reasons is that the model interpretation of SCT does not connect well with applications: ``The logical plant model proposed in supervisory control theory assumes a plant that
`generates' events spontaneously unless it is prevented
from doing so. The control mechanism available to the
supervisor is the ability to prevent the occurrence of some
events, called controllable events. This model is not appropriate
for most real systems. In fact, real systems usually
react to commands as inputs with responses as outputs.'' The authors assume an input-output model where the supervisor has to be controllable and the plant has to be complete for commands issued by the supervisor.

In the present paper, we enrich the traditional options for a supervisor to enable/disable events with the option to preempt events by forcing some event to occur. We however, do not assume that the plant is complete for such commands as \cite{Balemi93} does. Instead we formulate a new forcible-controllability property that the supervisor must satisfy.

The contributions of this paper can be summarized as follows:
\begin{enumerate}
\item In this paper we provide a basic untimed supervisory control framework where besides the traditional enabling/disabling of events generated by the plant also event forcing is considered. 
\item We formulate a notion of forcible controllability that captures the interplay between controllability of a supervisor w.r.t.\ the uncontrollable events provided by a plant in the setting where forcing of events may be used to preempt such uncontrollable events. 
\item We provide the basic properties of forcible-controllability and show that a maximally permissive forcibly-controllable supervisor exists. 
\item We provide a synthesis algorithm that computes such a supervisor for a given plant automaton.
\end{enumerate}

\paragraph*{Structure of the paper}
In Section \ref{section:forcingsetting} some preliminaries about supervisory control theory are introduced that are related to the subject material of this paper, and the forcing supervisory control problem is formulated. Then, in Section \ref{section:FCsublanguages}, the notion of forcible-controllability is introduced that is crucial in capturing the interplay between plant and (forcing) supervisor. Properties of this notion are discussed in detail, and it is proved that this notion is necessary and sufficient for the solvability of the formulated forcing supervisory control problem. In Section \ref{section:supremalFC} it is shown that the supremal forcibly-controllable sublanguage of a given language exists, which provides a maximally permissive solution to the forcing supervisory control problem. Section \ref{section:algorithm} presents an algorithm for computing the maximally permissive supervisor and states its properties such as termination and correctness. Sections \ref{section:smallmanufacturingsystem} and \ref{section:smallfactory} illustrate the outcome of the algorithm and showcase the benefit of introducing event forcing. The paper is concluded in Section \ref{section:conclusions}.

\section{Forcible Events and Forcing Supervisory Control}\label{section:forcingsetting}

Consider that a plant to be controlled is modeled by a finite automaton $P = (Q,\Sigma,\delta,q_0,Q_m)$ where $Q$ is a finite set of states, $\Sigma$ is a finite set of events partitioned into controllable events $\Sigma_c$ and uncontrollable events $\Sigma_u$, $\delta : Q \times \Sigma \rightarrow Q$ is a partial transition function, $q_0\in Q$ is the initial state, and $Q_m \subseteq Q$ is the set of marker states. A controllable event $c \in \Sigma_c$ can be disabled by an external agent (supervisor), whereas an uncontrollable event $u \in \Sigma_u$ cannot be disabled. 

Plant $P$ generates a closed language $L(P)$ and a marked language $L_m(P)$, as defined in the usual way \cite{bookcai}.

\begin{definition}[Nonblocking]
A finite automaton $P=(Q,\Sigma,\delta,q_0,Q_m)$ is called {\em nonblocking} if $L(P) = \overline{L_m(P)}$.
\end{definition}

For a string $s \in L(P)$, let $E_P(s) := \{\sigma \in \Sigma \mid s \sigma \in L(P)\}$ be the subset of eligible events that can occur after $s$ in $P$. 

\begin{definition}[Controllable sublanguage]
Given a (nonblocking) plant $P$, a sublanguage $F \subseteq L_m(P)$ is {\em controllable} w.r.t.\ $P$ if  
\[(\forall {s \in \overline{F}},~\forall {u \in \Sigma_u}) ~ [u \in E_P(s) \implies su \in \overline{F}]. \]
\end{definition}

Controllable sublanguages are closed under union (see e.g., \cite{bookcai}). Thus given an arbitrary language $F \subseteq L_m(P)$, there exists the supremal controllable sublanguage of $F$. 

\begin{figure*}[t!]
    \begin{center} \includegraphics[width=13cm]{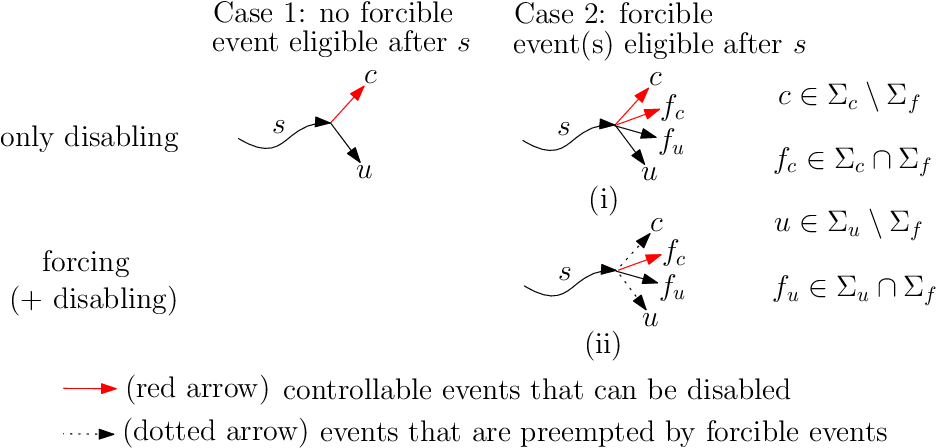}
    \caption{Control mechanisms for supervisory control. \label{fig:mechanism}}
    \end{center}
\end{figure*}

Now bring in a subset of {\em forcible events} $\Sigma_f \subseteq \Sigma$. A forcible event $f \in \Sigma_f$ can be forced by an external agent in order to preempt other event occurrences. Thus forcible events provide a new mechanism for control, which is distinct from controllable events (that can be disabled themselves but cannot preempt other events). 

A forcible event $f\in \Sigma_f$ can be either controllable or uncontrollable. If $f \in \Sigma_f \cap \Sigma_c$, then $f$ can either be forced or be disabled. For example, `cross\_street' can be an event that a pedestrian can force (by accelerating upon a slowly approaching car) or disabled (i.e. giving up crossing upon a fast approaching car). 
On the other hand, if $f \in \Sigma_f \cap \Sigma_u$, then $f$ can be forced but cannot be disabled. An example of such an uncontrollable forcible event is `land\_plane', which may be forced (say within 15 minutes) but cannot be prevented (plane has to land eventually) \cite{bookcai}.

With this added forcing mechanism, we define the corresponding {\em supervisory control} for plant $P$ which is any function $V : L(P) \rightarrow Pwr(\Sigma)$ (here $Pwr(\cdot)$ denotes powerset). Let us analyze the characteristics of this supervisory control $V$. Consider a strings $s  \in L(P)$, and let $c,f_c,u,f_u$ respectively be a nonforcible controllable, forcible controllable, nonforcible uncontrollable, forcible uncontrollable
event. As displayed in Fig.~\ref{fig:mechanism}, we discuss two cases.

Case~1: $E_P(s) \cap \Sigma_f = \varnothing$, i.e. no forcible event is eligible after $s$. In this case, only controllable events may be disabled (e.g. $c$ in Fig.~\ref{fig:mechanism}, Case 1).

Case~2: $E_P(s) \cap \Sigma_f \neq \varnothing$, i.e., there exists at least one forcible event eligible after $s$. In this case, the eligible forcible event(s) may be either forced or not forced. When all the eligible forcible events are not forced (see Fig.~\ref{fig:mechanism}(i), Case 2), again only controllable events may be disabled (e.g., $c, f_c$ in Fig.~\ref{fig:mechanism}(i)). On the other hand, if an eligible forcible event is forced (see Fig.~\ref{fig:mechanism}(ii), Case 2), then all the other eligible events after $s$ are preempted. Since in general there are multiple forcible events that may be forced after $s$ (e.g., $f_c,f_u$ in Fig.~\ref{fig:mechanism}(ii)), we will define the supervisory control function to include all such forcible events for the sake of maximal freedom of choice. In implementation, one of these forcible events will be chosen (say by an external forcing agent) such that other events are preempted; our model will leave such a choice nondeterministic (much like the nondeterministic occurrence of one event among multiple enabled events, e.g., $u,f_u$ in Fig.~\ref{fig:mechanism}(i)). It is also noted that disabling a forcible controllable event is also possible (e.g., $f_c$ in Fig.~\ref{fig:mechanism}(ii)), and such disabled forcible events cannot be forced. On the other hand, a forcible uncontrollable event cannot be disabled (e.g., $f_u$ in Fig.~\ref{fig:mechanism}(ii)), and such forcible events may be either forced or not forced.

With the above discussion on possible scenarios of disabling and forcing control mechanisms, we define the supervisory control $V : L(P) \rightarrow Pwr(\Sigma)$ such that
\begin{align} \label{eq:V}
    V(s) = \begin{cases}
  \Sigma_u \cup \Sigma_c'  & \text{ if } E_P(s) \cap \Sigma_f = \varnothing, \\
  {\rm either}~\Sigma_u \cup \Sigma_c'~{\rm or}~\Sigma_f'& \text{ if } E_P(s) \cap \Sigma_f \neq \varnothing. 
\end{cases}
\end{align}
Here $\Sigma'_c \subseteq \Sigma_c$, $\Sigma'_f \subseteq \Sigma_f$. The first line corresponds to Case~1 above, whereas the second line to Case~2. In the second line, `either $\Sigma_u \cup \Sigma'_c$' is when none of the eligible forcible events are forced (Fig.~\ref{fig:mechanism}(i)), while `or $\Sigma'_f$' is when at least one eligible forcible event is forced (Fig.~\ref{fig:mechanism}(ii)).

Write $V/P$ for the {\em closed-loop system} where the plant $P$ is under the supervisory control of $V$. The closed language generated by $V/P$ is defined as follows:
\begin{enumerate}
    \item [(i)] $\epsilon\in L(V/P)$;
    \item [(ii)] $s\in L(V/P),  \sigma\in V(s), s\sigma\in L(P) \Rightarrow s\sigma\in L(V/P)$;
    \item [(iii)] no other strings belong to $L(V/P)$.
\end{enumerate}
Let $F \subseteq L_m(P)$. The marked language of $V/P$ w.r.t.~$F$ is 
\begin{align*}
L_m(V/P) := L(V/P) \cap F.
\end{align*}
We say that $V/P$ is nonblocking if 
\begin{align*}
L(V/P) = \overline{L_m(V/P)}.
\end{align*}

Now we are ready to formulate the forcing supervisory control problem.

\begin{problem}[Forcing supervisory control problem] \label{problem:fsc}
Given a plant $P$ and a specification $\varnothing \neq F \subseteq L_m(P)$, design a supervisory control $V : L(P) \rightarrow Pwr(\Sigma)$ such that 
\begin{enumerate}
    \item [(i)] $V/G$ is nonblocking;
    \item [(ii)] $L_m(V/G) = F$.
\end{enumerate}
\end{problem}

In words, Problem~\ref{problem:fsc} is after a supervisory control that not only renders the closed-loop system nonblocking but also synthesizes the imposed specification. 

\section{Forcibly-Controllable Sublanguages}
\label{section:FCsublanguages}

\begin{figure*}[t!]
    \begin{center} \includegraphics[width=.7\textwidth]{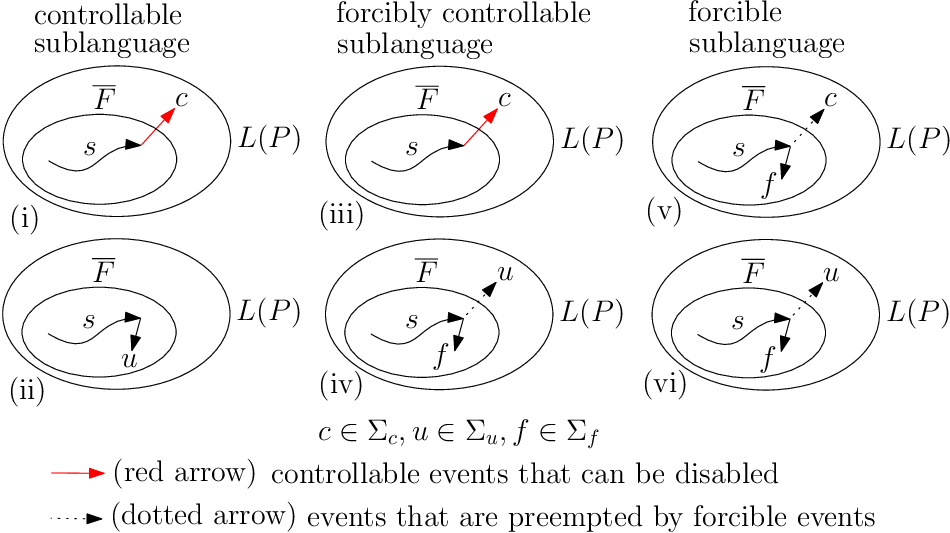}
    \caption{Comparisons among controllable, forcibly-controllable, and forcible sublanguages.\label{fig:confor}}
    \end{center}
\end{figure*}

To solve Problem~\ref{problem:fsc}, the following new language property is key.

\begin{definition}[Forcibly-controllable sublanguage]\label{defn:forcon}
Given a (nonblocking) plant $P$, a sublanguage $F \subseteq L_m(P)$ is {\em forcibly-controllable} w.r.t.\ $P$ if
\[ (\forall s \in \overline{F}) ~ \left[
\begin{array}{@{}l}
(\forall u \in \Sigma_u)~[ u \in E_P(s) \implies su \in \overline{F}] \\
\lor \\
(\exists f \in \Sigma_f)~[sf \in \overline{F}]\land (\forall \sigma \in \Sigma\setminus\Sigma_f)~[s \sigma \not\in \overline{F}] \end{array} \right]. 
\]
\end{definition}

In words, a sublanguage $F$ is forcibly-controllable if either $F$ is controllable or there exists a forcible event that keeps $\overline{F}$ invariant by preempting all non-forcible events. 

Thus by definition, if a sublanguage $F$ is controllable, $F$ is also forcibly-controllable. The difference between these two properties is illustrated in Fig.~\ref{fig:confor}(ii) and (iv). While controllability does not allow uncontrollable event $u$ to take string $s$ out of $\overline{F}$ (Fig.~\ref{fig:confor}(ii)), forcible-controllability allows so if there exists a forcible event $f$ after $s$ to keep $sf$ inside $\overline{F}$ by preempting $u$ (Fig.~\ref{fig:confor}(iv)). Based on Fig.~\ref{fig:confor}(iv), it is readily seen that a forcibly-controllable sublanguage need not be controllable.

Now that we have seen that controllability involves pure disabling, forcible-controllability invovles both disabling and forcing, it is natural to define another language property which involves pure forcing. In the following, we define this language property -- forcible sublanguages -- which is not needed in our main development but interesting in itself as compared to the other language properties.

\begin{definition}[Forcible sublanguage]\label{defn:for}
Given a (nonblocking) plant $P$, a sublanguage $F \subseteq L_m(P)$ is {\em forcible} w.r.t.\ $P$ if
\[ (\forall s \in \overline{F}) ~ \left[
\begin{array}{@{}l}
(\forall \sigma \in \Sigma)~[ \sigma \in E_P(s) \implies s\sigma \in \overline{F}] \\
\lor \\
(\exists f \in \Sigma_f)~[sf \in \overline{F}]\land (\forall \sigma \in \Sigma\setminus\Sigma_f)~[s \sigma \not\in \overline{F}] \end{array} \right]. 
\]
\end{definition}

Compared to Definition~\ref{defn:forcon}, forcible sublanguage is defined by only replacing the universal quantification on the first line with $u \in \Sigma_u$ by $\sigma \in \Sigma$. This change means that whenever a string $s$ is taken out of $\overline{F}$ by any event $\sigma$, controllable or uncontrollable, there must exist at least one forcible event $f$ after $s$ to keep $sf$ inside $\overline{F}$ by preempting $\sigma$. This is illustrated in Fig.~\ref{fig:confor}(v) and (vi). 

Thus by definition, if a sublanguage $F$ is forcible, $F$ is also forcibly-controllable. But the reverse need not be true: a forcibly-controllable sublanguage allows $s$ to exit $\overline{F}$ by a controllable event without the presence of a forcible event (Fig.~\ref{fig:confor}(iii)), which is not allowed by a forcible sublanguage.

In addition, it is not difficult to see that forcible sublanguages and controllable sublanguagess generally are not related. A forcible sublanguage need not be controllable, because the scenario in Fig.~\ref{fig:confor}(vi) is allowed by a forcible sublanguage but not by a controllable sublanguage. Conversely, since the scenario in Fig.~\ref{fig:confor}(i) is allowed by controllable sublanguage but not by forcible sublanguage, a controllable sublanguage need not be forcible.
 
We summarize the above relationships among the three language properties.

\begin{fact}
Let plant $P$ be a finite automaton over $\Sigma = \Sigma_c \cup \Sigma_u$ and a forcible event set $\Sigma_f \subseteq \Sigma$. Also let $F \subseteq L_m(P)$. 
\begin{enumerate}
    \item If $F$ is controllable, it is also forcibly-controllable.
    \item If $F$ is forcible, it is also forcibly-controllable.
    \item There are focibly-controllable sublanguages that are not controllable.
    \item There are forcibly-controllable sublanguages that are not forcible.
    \item There are controllable sublanguages that are not forcible. 
    \item There are forcible sublanguages that are not controllable.
\end{enumerate}
\end{fact}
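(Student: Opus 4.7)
The plan is to handle the six items in two groups: the two implications (1) and (2) follow directly by unwinding the definitions, and the four separation claims (3)--(6) will be established by exhibiting small concrete plants and sublanguages that realise the scenarios already depicted in Fig.~\ref{fig:confor}.

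For item (1), I would take any $s \in \overline{F}$ and observe that controllability of $F$ gives exactly the first disjunct in Definition~\ref{defn:forcon}, namely $(\forall u\in\Sigma_u)\,[u\in E_P(s)\Rightarrow su\in\overline{F}]$, so forcible-controllability holds at $s$ trivially. For item (2), I would again fix $s\in\overline{F}$ and apply Definition~\ref{defn:for}: one of its two disjuncts holds. If the second disjunct (existence of a preempting forcible $f$) holds, it is literally the second disjunct in Definition~\ref{defn:forcon}. If instead the first disjunct of forcibility holds, then every eligible $\sigma\in E_P(s)$ keeps $s\sigma\in\overline{F}$; specialising to $\sigma\in\Sigma_u$ yields the controllability clause, i.e.\ the first disjunct of Definition~\ref{defn:forcon}. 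Either way, $F$ is forcibly-controllable.

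For items (3)--(6), the strategy is to build a tiny automaton whose language $L_m(P)$ contains the intended $F$ and such that the relevant local picture at a chosen string $s$ matches one of the panels (i)--(vi) of Fig.~\ref{fig:confor}. Concretely, for (3) and (4) I would use a plant in which, after some $s\in\overline{F}$, there is an eligible uncontrollable event $u$ with $su\notin\overline{F}$ together with an eligible forcible event $f$ with $sf\in\overline{F}$ such that no other non-forcible continuation of $s$ lies in $\overline{F}$: this realises panel (iv) (resp.\ (vi)) and shows $F$ is forcibly-controllable but not controllable (resp.\ not forcible, using a controllable event exiting $\overline{F}$ as in panel (iii)). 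For (5), a plant with an uncontrollable event that always stays in $\overline{F}$ but no forcible event available (panel (i)) yields a controllable $F$ that fails forcibility. For (6), panel (vi) already gives a forcible $F$ that is not controllable, because the uncontrollable event leaves $\overline{F}$ while a preempting forcible exists.

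None of the steps look genuinely hard: the implications are immediate rewrites of the definitions, and the separations only require naming concrete two- or three-state automata that implement the figure. The only place to be slightly careful is in the counterexamples for (3), (4) and (6), where one has to make sure that (a) $F$ is a sublanguage of $L_m(P)$ with a well-defined prefix closure, and (b) the chosen $s$ has exactly the eligibility pattern prescribed by the figure, so that no other quantifier in the definitions is accidentally violated at some other prefix of $F$. I would therefore write out one minimal automaton per case rather than reuse one model, to keep the verification of the relevant definition purely local at the single state reached by $s$.
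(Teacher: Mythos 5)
Your overall route is the same as the paper's: the paper does not give a formal proof of this fact at all, but justifies items (1) and (2) by observing that the controllability (resp.\ forcibility) clause at each $s\in\overline{F}$ directly yields one of the two disjuncts of Definition~\ref{defn:forcon}, and justifies items (3)--(6) by pointing at the scenarios in Fig.~\ref{fig:confor}. Your treatment of (1), (2), (3), (4) and (6) is correct, and your instinct to instantiate each figure panel as a small concrete automaton (as the paper itself later does in Example~\ref{ex:forcon_inter}) is if anything more careful than the paper's prose.

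There is, however, one step that would fail as written: the counterexample you sketch for item (5). A plant in which the only continuation of $s$ is an uncontrollable event $u$ with $su\in\overline{F}$ does \emph{not} fail forcibility --- the first disjunct of Definition~\ref{defn:for} quantifies over all eligible $\sigma\in\Sigma$, and it is satisfied when every eligible event (here just $u$) keeps the string inside $\overline{F}$, so such an $F$ is forcible and separates nothing. What item (5) needs (and what panel (i) of Fig.~\ref{fig:confor} actually depicts) is an eligible \emph{controllable}, non-forcible event $c$ with $sc\in L(P)$ but $sc\notin\overline{F}$ and no forcible event $f$ with $sf\in\overline{F}$: then controllability holds at $s$ (it constrains only uncontrollable events) while both disjuncts of forcibility fail. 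For instance $L(P)=L_m(P)=\{\varepsilon,c\}$ with $c\in\Sigma_c\setminus\Sigma_f$ and $F=\{\varepsilon\}$ works --- and note this is exactly the automaton you already propose for item (4), so the fix costs nothing.
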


At this point, it is also convenient to state the following facts about the three language properties.

\begin{fact}
\label{prop2}
Let plant $P$ be a finite automaton over $\Sigma = \Sigma_c \cup \Sigma_u$ and a forcible event set $\Sigma_f \subseteq \Sigma$.
\begin{enumerate}
    \item The empty language $\varnothing$ is controllable, forcibly-controllable, and forcible.
    \item The closed language $L(P)$ is controllable, forcibly-controllable, and forcible. \label{initialcase}
    \item Let $\Sigma_f = \varnothing$. Then $F$ is forcibly-controllable iff $F$ is controllable.
    \item Let $\Sigma_c = \varnothing$. Then $F$ is controllable iff $F$ is forcible iff $F$ is forcibly-controllable. 
    \item Let $\Sigma_u = \varnothing$. Any sublanguage $F \subseteq L_m(P)$ is controllable and forcibly-controllable.
\end{enumerate}
\end{fact}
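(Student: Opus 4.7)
The overall plan is to dispatch each of the five items by directly unfolding Definitions~\ref{defn:forcon} and~\ref{defn:for} together with the definition of a controllable sublanguage, and then performing elementary propositional reasoning on the resulting disjunctions and conjunctions. Fact~1 can be invoked wherever one of the three pairwise implications is already established.

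Items (1), (2), (3), and (5) are routine. For (1), $\overline{\varnothing} = \varnothing$, so the outer quantifier $(\forall s \in \overline{F})$ in each definition is vacuously satisfied. For (2), $\overline{L(P)} = L(P)$, and for any $s \in L(P)$ and $\sigma \in E_P(s)$ one has $s\sigma \in L(P)$ by the very definition of $E_P$; hence the first disjunct of each definition holds at every $s$. For (3), if $\Sigma_f = \varnothing$ then the existential $(\exists f \in \Sigma_f)$ in the second disjunct of Definition~\ref{defn:forcon} is unsatisfiable, so forcible-controllability collapses to its first disjunct, which is exactly the definition of controllability. For (5), if $\Sigma_u = \varnothing$ then the universal $(\forall u \in \Sigma_u)$ in the controllability condition ranges over an empty set and is vacuously true, making every $F \subseteq L_m(P)$ controllable; forcible-controllability then follows from Fact~1(1).

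Item (4) is the only place where nontrivial work is needed and is the main obstacle. With $\Sigma_c = \varnothing$ we have $\Sigma = \Sigma_u$, so three things become visible at once: the controllability condition and the first disjunct of Definition~\ref{defn:for} are syntactically the same formula; the first disjunct of Definition~\ref{defn:forcon} is also that same formula; and the two forcing disjuncts in Definitions~\ref{defn:forcon} and~\ref{defn:for} are verbatim identical. From this syntactic alignment, forcible and forcibly-controllable are the same predicate, so the equivalence of those two is immediate. The direction controllable $\Rightarrow$ forcible (or forcibly-controllable) is obtained by taking the first disjunct. For the converse, I would argue pointwise at each $s \in \overline{F}$ by a case split on which disjunct holds: the first disjunct gives the controllability clause at $s$ directly, and under the forcing disjunct one combines the witness $f \in \Sigma_f$ with the preemption clause $(\forall \sigma \in \Sigma\setminus\Sigma_f)[s\sigma \notin \overline{F}]$ to account for each eligible uncontrollable continuation at $s$.

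The main obstacle I anticipate is pinning down precisely how the forcing disjunct, under $\Sigma_c = \varnothing$, is to yield the controllability clause at $s$; this requires a careful treatment of eligible uncontrollable events that happen to lie in $\Sigma_f$, for which the preemption clause is silent, while the non-forcible uncontrollable case follows immediately from the preemption clause. All other items are straightforward unfoldings, so no additional machinery beyond the three definitions and Fact~1 should be needed.
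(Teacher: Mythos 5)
Items (1), (2), (3), and (5) of your proposal match the paper's proof essentially verbatim: vacuity of the outer quantifier for the empty language, the first disjunct holding identically for $L(P)$, the unsatisfiable existential when $\Sigma_f = \varnothing$, and the vacuous universal when $\Sigma_u = \varnothing$. For item (4) the paper does exactly and only what your first observation does: with $\Sigma = \Sigma_u$ the texts of Definitions~\ref{defn:forcon} and~\ref{defn:for} coincide, so forcible and forcibly-controllable become the same predicate; the paper then simply declares ``hence the conclusion holds'' without ever addressing the controllability clause.

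The genuine gap is in your plan for the remaining direction of item (4), forcibly-controllable $\Rightarrow$ controllable. You claim that under the forcing disjunct ``the non-forcible uncontrollable case follows immediately from the preemption clause,'' but that clause asserts $s\sigma \notin \overline{F}$ for every $\sigma \in \Sigma\setminus\Sigma_f$, which is the \emph{negation} of what controllability demands whenever such a $\sigma$ is eligible in the plant after $s$. The implication you are trying to close is in fact false: take $L(P) = \{\varepsilon, f, u\}$ with $\Sigma_c = \varnothing$, $f$ forcible, $u$ not forcible, and $F = \{\varepsilon, f\}$. At $s = \varepsilon$ the forcing disjunct holds ($sf \in \overline{F}$ and $su \notin \overline{F}$), so $F$ is forcible and forcibly-controllable, yet $u \in E_P(\varepsilon)$ with $su \notin \overline{F}$ shows $F$ is not controllable. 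This is precisely $K_1$ in the paper's Example~\ref{ex:forcon_inter}. So no case analysis on the forcing disjunct can succeed; the obstacle you flagged (forcible uncontrollable events about which the preemption clause is silent) is real, but the non-forcible case you deemed immediate is where the implication actually breaks. The only implications that survive are controllable $\Rightarrow$ forcible $\Leftrightarrow$ forcibly-controllable, and the paper's own one-line proof of item (4) establishes only the last equivalence.
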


\begin{proof} ~
\begin{enumerate}
    \item Since $\overline{\varnothing} = \varnothing$, by Definitions~2, 3, and 4, the empty language is vacuously controllable, forcibly-controllable, and forcible.
    \item Since $\overline{L(P)} = L(P)$, we have both $\overline{L(P)} \subseteq L(P)$ and $s\sigma \in L(P) \implies s\sigma\in \overline{L(P)}$ for any $s$ and $\sigma$. Therefor by definition $L(P)$ is controllable, forcibly-controllable, and forcible.
    \item For $\Sigma_f = \varnothing$, the second line of logic formula in Definition~\ref{defn:forcon} is always false. Thus the definition of forcible-controllability reduces to the definition of controllability, and the conclusion holds. 
   \item For $\Sigma_c= \varnothing$ we have $\Sigma  = \Sigma_u$. In this case the definitions of forcible-controllability and forcibility are identical, and hence the conclusion holds.
   \item Let $\Sigma_u = \varnothing$ and $F \subseteq L_m(P)$. 
   According to the logic formulas in the definitions of controllable and forcibly-controllable sublanguages, $F$ is vacuously controllable and forcibly-controllable. 
\end{enumerate}
\end{proof}

Now we state the main result of this section, which characterizes solvability of Problem~\ref{problem:fsc} by forcible-controllability.

\begin{theorem} \label{thm:solvability}
Consider a plant $P$ and a specification $\varnothing \neq F \subseteq L_m(P)$.  There exists a supervisory control $V : L(P) \rightarrow Pwr(\Sigma)$ such that $V/P$ is nonblocking and $L_m(V/P) = F$ 
iff $F$ is forcibly-controllable.
\end{theorem}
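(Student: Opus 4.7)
The plan is to prove the two directions separately: sufficiency by an explicit construction of a supervisor, and necessity by case analysis on the form that $V(s)$ takes in~\eqref{eq:V}.

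For sufficiency, assume $F$ is forcibly-controllable. I would define $V$ on every $s \in \overline{F}$ by reading off whichever disjunct of Definition~\ref{defn:forcon} holds at $s$: if the first (controllability) disjunct holds, set $V(s) := \Sigma_u \cup \{c \in \Sigma_c : sc \in \overline{F}\}$; otherwise the second (forcing) disjunct holds, and I put $V(s) := \{f \in \Sigma_f : sf \in \overline{F}\}$, which is nonempty by the existential in that disjunct. For strings in $L(P) \setminus \overline{F}$ any legal value of $V$ will do. First I would check that this $V$ conforms to~\eqref{eq:V}: in the forcing branch any $f$ with $sf \in \overline{F}$ automatically lies in $E_P(s) \cap \Sigma_f$, so that branch is used only when $E_P(s) \cap \Sigma_f \neq \varnothing$, as required. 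The central step is then $L(V/P) = \overline{F}$, proved by induction on string length. The forward inclusion uses the construction of $V(s)$ in each branch directly; the reverse inclusion in the forcing branch crucially invokes the clause $(\forall \sigma \in \Sigma \setminus \Sigma_f)[s\sigma \notin \overline{F}]$ of disjunct~(b) to guarantee that any $\sigma$ with $s\sigma \in \overline{F}$ is forcible and therefore in $V(s)$. Once $L(V/P) = \overline{F}$ is established, $L_m(V/P) = L(V/P) \cap F = F$ and $\overline{L_m(V/P)} = \overline{F} = L(V/P)$ yield both required properties.

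For necessity, suppose such a $V$ exists. Nonblockingness together with $L_m(V/P) = F$ gives $L(V/P) = \overline{L_m(V/P)} = \overline{F}$. I fix any $s \in \overline{F}$ and split on the two forms of $V(s)$ permitted by~\eqref{eq:V}. If $V(s) = \Sigma_u \cup \Sigma_c'$, then each $u \in E_P(s) \cap \Sigma_u$ lies in $V(s)$ and $su \in L(P)$, hence $su \in L(V/P) = \overline{F}$, which is disjunct~(a). If instead $V(s) = \Sigma_f'$, then $E_P(s) \cap \Sigma_f \neq \varnothing$ and the ``at least one eligible forcible event is actually forced'' convention attached to this branch of~\eqref{eq:V} yields some $f \in \Sigma_f' \cap E_P(s)$ with $sf \in \overline{F}$, while every $\sigma \in \Sigma \setminus \Sigma_f$ is outside $V(s)$ and therefore $s\sigma \notin \overline{F}$; together these are disjunct~(b).

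The step I expect to be most delicate is handling the forcing branch cleanly in both directions. In the necessity argument I must extract the witness $f$ for disjunct~(b) even at states $s$ that are ``leaves'' of $F$ while uncontrollable events are still eligible in $P$; in the sufficiency argument I must verify that the constructed $V$ has the shape permitted by~\eqref{eq:V}. Both rest on formalizing the convention, emphasized in the paper's discussion of Fig.~\ref{fig:mechanism}, that the forcing branch of~\eqref{eq:V} is chosen only when at least one eligible forcible event is actually in the set $\Sigma_f'$. Once this convention is pinned down, the remaining verifications---the induction for $L(V/P) = \overline{F}$, the nonblocking identity, and the case split on the disjuncts of Definition~\ref{defn:forcon}---are routine.
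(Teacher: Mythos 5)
Your proposal matches the paper's proof essentially step for step: the same supervisory control (the paper's Equation~(\ref{eq:VV}) reads off the disjuncts of Definition~\ref{defn:forcon} exactly as you do, using the controllable-continuation set when local controllability holds and the forcible-continuation set otherwise), the same induction on string length to establish $L(V/P)=\overline{F}$, and the same extraction of the two disjuncts for necessity from $L(V/P)=\overline{F}$. The only difference is organizational---you split the necessity argument on the form of $V(s)$ permitted by (\ref{eq:V}) rather than on the conditions of (\ref{eq:VV}), and the convention you rightly flag (that the forcing branch is taken only when some forced event is actually eligible, so that $sf\in L(V/P)$ for some $f\in\Sigma_f$) is exactly what the paper relies on implicitly in its Case~2 of the necessity proof.
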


The result of Theorem~\ref{thm:solvability}, as well as its proof below, is a direct generalization of the fundamental result of supervisory control theory \cite{RW} by adding the new control mechanism of forcible events. In the proof, for convenience we introduce the notation $E_F(s) = \{\sigma \in \Sigma \mid s\sigma \in \overline{F}\}$, which collects the events that can continue after a string $s$ in $\overline{F}$. If $F \subseteq L_m(P)$, then $E_F(s) \subseteq E_P(s)$. 

\begin{proof}
Define a supervisory control $V : L(P) \rightarrow Pwr(\Sigma)$ according to 
\begin{align} \label{eq:VV}
    V(s) = \begin{cases}
  \Sigma_u \cup \{ \sigma \in \Sigma_c \mid s\sigma \in \overline{F}\}  & \text{ if } \big[ E_F(s) \cap \Sigma_f = \varnothing \\
  & \hspace{-2.1cm}{}  \lor (\forall \sigma \in \Sigma_u)~[ \sigma \in E_P(s) \Rightarrow s\sigma \in \overline{F}] \big] \\
  \{ \sigma \in \Sigma_f \mid s\sigma \in \overline{F}\} & \text{ if } \big[ E_F(s) \cap \Sigma_f \neq \varnothing \\
  & \hspace{-2.0cm} {} \land (\exists \sigma \in \Sigma_u)~[ \sigma \in E_P(s) \land s\sigma \notin \overline{F}] \big].
\end{cases}
\end{align}
Since $E_F(s) \cap \Sigma_f \neq \emptyset$ implies $E_P(s) \cap \Sigma_f \neq \emptyset$, this $V$ is indeed a supervisory control as defined in Equation~(\ref{eq:V}).  This supervisory control $V$ will force a forcible event $f$ after string $s$ only when $f \in E_F(s)$ and there is an uncontrollable event $u$ s.t. $su \in L(P) \setminus \overline{F}$ (i.e. controllability fails). 

(Sufficiency) We claim that with the above $V$ in (\ref{eq:VV}),
\begin{align*}
L(V/P) = \overline{F}.
\end{align*} 
Before proving this claim, we point out that once this claim is established, the desired conclusion follows immediately:
\begin{align*}
\overline{L_m(V/P)} &= \overline{F} = L(V/P) \mbox{~~~($V/P$ is nonblocking)} \\
L_m(V/P) &= L(V/P) \cap F = \overline{F} \cap F = F.
\end{align*} 
Now we prove the claim, by induction on the length of strings. For the base case, since the empty string $\epsilon$ belongs to both $L(V/P)$ (by definition) and $\overline{F}$ (since $F \neq \varnothing$), the conclusion holds. Now let $s\in \Sigma^*$ and suppose that 
\begin{align*}
s \in L(V/P) \Leftrightarrow s \in \overline{F}.
\end{align*} 
Let $\sigma \in \Sigma$. First suppose $s\sigma \in L(V/P)$; it will be shown $s\sigma \in \overline{F}$. By the definition of $L(V/P)$, we have 
\begin{align*}
s \in L(V/P),~ \sigma \in V(s),~ \sigma \in E_P(s). 
\end{align*} 
We also have by hypothesis that $s \in \overline{F}$. Now we consider two cases, according to the supervisory control $V$ in Equation (\ref{eq:VV}). 

\noindent Case 1: $E_F(s) \cap \Sigma_f = \varnothing \lor (\forall \sigma \in \Sigma_u)~[ \sigma \in E_P(s) \Rightarrow s\sigma \in \overline{F}]$.

In this case, if $\sigma \in \Sigma_u$ and $E_F(s) \cap \Sigma_f = \varnothing$, since $F$ is forcibly controllable and $\sigma \in E_P(s)$, we have $s\sigma \in \overline{F}$ (first logic formula in Definition~\ref{defn:forcon}). If $\sigma \in \Sigma_u$ and $(\forall \sigma \in \Sigma_u)~[ \sigma \in E_P(s) \Rightarrow s\sigma \in \overline{F}]$, since $\sigma \in E_P(s)$, we again have $s\sigma \in \overline{F}$. Finally if $\sigma \in \Sigma_c$, since $\sigma \in V(s)$, we have $s\sigma \in \overline{F}$ (by $V(s)$ in Equation  (\ref{eq:VV})). 

\noindent Case 2: $E_F(s) \cap \Sigma_f \neq \varnothing \land (\exists \sigma \in \Sigma_u)~[ \sigma \in E_P(s) \land s\sigma \notin \overline{F}]$.

In this case, since $\sigma \in V(s)$, we have $\sigma \in \Sigma_f$ and $s\sigma \in \overline{F}$.

\noindent Therefore in both cases above, we have established $s\sigma \in \overline{F}$. 

Conversely suppose $s\sigma \in \overline{F}$; we show $s\sigma \in L(V/P)$. It follows from $s\sigma \in \overline{F}$ that
\begin{align*}
& s\sigma \in L(P) \quad \mbox{($\overline{F} \subseteq L(P)$)} \\
& s \in \overline{F} \quad \mbox{($\overline{F}$ is closed)} \\ 
& s \in L(V/P) \quad \mbox{(hypothesis).}
\end{align*}
Hence to show $s\sigma \in L(V/P)$, all we need to show is $\sigma \in V(s)$. Again we consider two cases like above. 

\noindent Case 1: $E_F(s) \cap \Sigma_f = \varnothing \lor (\forall \sigma \in \Sigma_u)~[ \sigma \in E_P(s) \Rightarrow s\sigma \in \overline{F}]$.

In this case, if $\sigma \in \Sigma_u$, then $\sigma \in V(s)$ (by $V(s)$ in Equation (\ref{eq:VV})). If $\sigma \in \Sigma_c$, since $s\sigma \in \overline{F}$, it again follows from $V(s)$ in Equation (\ref{eq:VV}) that $\sigma \in V(s)$. 

\noindent Case 2: $E_F(s) \cap \Sigma_f \neq \varnothing \land (\exists \sigma \in \Sigma_u) ~[ \sigma \in E_P(s) \land s\sigma \notin \overline{F}]$.

In this case, since $s\sigma \in \overline{F}$ and $F$ is forcibly-controllable (in particular the second logic formula in Definition~\ref{defn:forcon}), we have $\sigma \in \Sigma_f$. Hence $\sigma \in V(s)$. 

\noindent Therefore in both cases above, we have established $s\sigma \in V(s)$, and thereby $s\sigma \in L(V/P)$ is proved.  
This finishes the induction step, and the proof of sufficiency is now complete.

(Necessity) Suppose that $\overline{L_m(V/P)} = L(V/P)$ and $L_m(V/P) = F$. It will be shown that $F$ is forcibly-controllable. By the above assumption, we have $L(V/P) = \overline{L_m(V/P)} = \overline{F}$. Let $s \in \overline{F}$. Then $s \in L(V/P)$. Below we consider two cases (as done in the sufficiency proof). 

\noindent Case 1: $E_F(s) \cap \Sigma_f = \varnothing \lor (\forall \sigma \in \Sigma_u)~[ \sigma \in E_P(s) \Rightarrow s\sigma \in \overline{F}]$.

If $E_P(s) \cap \Sigma_f = \varnothing$, let $\sigma \in \Sigma_u$ and $\sigma \in E_P(s)$. Then $\sigma \in V(s)$ and $s\sigma \in L(P)$. It follows from the definition of $L(V/P)$ that $s\sigma \in L(V/P)$. Hence $s\sigma \in \overline{F}$. This shows that $F$ is controllable, and therefore is forcibly-controllable. On the other hand, if $(\forall \sigma \in \Sigma_u) ~[\sigma \in E_P(s) \Rightarrow s\sigma \in \overline{F}]$, by definition $F$ is forcibly-controllable. 

\noindent Case 2: $E_F(s) \cap \Sigma_f \neq \varnothing \land (\exists \sigma \in \Sigma_u)~[ \sigma \in E_P(s) \land s\sigma \notin \overline{F}]$.

In this case, since $E_F(s) \cap \Sigma_f \neq \varnothing$, there exists $f \in E_P(s) \cap \Sigma_f$. This means that $(\exists f \in \Sigma_f)sf \in \overline{F}$. Moreover, let $\sigma \in \Sigma \setminus \Sigma_f$. Then by definition of $V(s)$ in (\ref{eq:VV}), we have $\sigma \notin V(s)$. So $s\sigma \notin L(V/P) = \overline{F}$. This proves $(\forall \sigma \in \Sigma \setminus \Sigma_f) s\sigma \notin \overline{F}$. 
Therefore by Definition~\ref{defn:forcon}, $F$ is forcibly-controllable.
\end{proof}

Theorem~\ref{thm:solvability} provides a necessary and sufficient condition for the existence of a nonblocking supervisory control $V$ that realizes an imposed specification $F$, as required in Problem~\ref{problem:fsc}. This supervisory control $V$ may be implemented by a finite automaton $S$ in the same way as \cite[section 3.6]{bookcai}, such that the synchronous product of plant $P$ and $S$ is nonblocking and represents the specification $F$. This automaton $S$ is called a {\em supervisor}.

\section{Supremal Forcibly-Controllable Sublanguages}
\label{section:supremalFC}

In the preceding section, we know from Theorem~\ref{thm:solvability} that Problem~\ref{problem:fsc} is solvable whenever the specification $F \subseteq L_m(P)$ is forcibly-controllable. In this section, we inquire: {\em what if $F$ is not forcibly-controllable? In this case, does there exist an `optimal' supervisor in a sense of maximal permissiveness?}

Let $F \subseteq L_m(P)$. Whether or not $F$ is forcibly controllable, write the family 
\begin{align} \label{forconfaimily}
\mathcal{F}(F) := \{ K \subseteq F \mid K~\mbox{is forcibly-controllable w.r.t.}~P \}. 
\end{align}
In words, $\mathcal{F}(F)$ is the collection of all subsets of $F$ that are forcibly-controllable. Note that $\mathcal{F}(F)$ is nonempty because the empty language $\emptyset$ belongs to it (see {\bf Fact 2} above). Moreover, the next result asserts that $\mathcal{F}(F)$ is closed under set unions.

\begin{prop} \label{prop:union}
Consider $\mathcal{F}(F)$ in (\ref{forconfaimily}) and an arbitrary subset $\{K_i \mid i \in \mathcal{I}\} \subseteq \mathcal{F}(F)$ ($\mathcal{I}$ is some index set). Then $\bigcup_{i \in \mathcal{I}} K_i \in \mathcal{F}(F)$.
\end{prop}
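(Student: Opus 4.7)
The plan is to show that $K := \bigcup_{i \in \mathcal{I}} K_i$ is itself forcibly-controllable with $K \subseteq F$. The inclusion $K \subseteq F$ is immediate. The core work is to verify Definition~\ref{defn:forcon} for $K$, and I would organize the argument around the standard identity
\[
\overline{K} \;=\; \overline{\bigcup_{i \in \mathcal{I}} K_i} \;=\; \bigcup_{i \in \mathcal{I}} \overline{K_i},
\]
together with prefix-closure of each $\overline{K_i}$.

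Fix an arbitrary $s \in \overline{K}$; I want to prove the disjunction in Definition~\ref{defn:forcon} for $K$ at $s$. The plan is to assume the first disjunct fails and deduce the second. So suppose there exists $u \in \Sigma_u$ with $u \in E_P(s)$ and $su \notin \overline{K}$; equivalently $su \notin \overline{K_j}$ for every $j \in \mathcal{I}$. Pick any $i^* \in \mathcal{I}$ with $s \in \overline{K_{i^*}}$ (which exists by the union identity). Since $su \notin \overline{K_{i^*}}$, the first disjunct fails for $K_{i^*}$ at $s$, so because $K_{i^*}$ is forcibly-controllable, its second disjunct must hold: there exists $f \in \Sigma_f$ with $sf \in \overline{K_{i^*}} \subseteq \overline{K}$, and $s\sigma \notin \overline{K_{i^*}}$ for all $\sigma \in \Sigma \setminus \Sigma_f$. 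This immediately gives the existence part of the second disjunct for $K$.

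The remaining step, and the one that needs the most care, is the universal clause $(\forall \sigma \in \Sigma \setminus \Sigma_f)~[s\sigma \notin \overline{K}]$. I plan to establish it by contradiction: suppose some $\sigma \in \Sigma \setminus \Sigma_f$ satisfies $s\sigma \in \overline{K}$, so $s\sigma \in \overline{K_j}$ for some $j$. Since $\overline{K_j}$ is prefix-closed, also $s \in \overline{K_j}$. Now apply the same dichotomy to $K_j$ at $s$: the witness $u$ still shows the first disjunct fails for $K_j$ (because $su \notin \overline{K_j}$), so the second disjunct holds for $K_j$, which in particular forces $s\sigma \notin \overline{K_j}$ for every $\sigma \in \Sigma \setminus \Sigma_f$. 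This contradicts $s\sigma \in \overline{K_j}$ and finishes the argument.

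The main obstacle is exactly this last contradiction step: the subtlety of closure under union for forcibly-controllable languages lies in the fact that the second disjunct of Definition~\ref{defn:forcon} mixes an existential (a forcible event staying in $\overline{F}$) with a universal (no non-forcible event staying in $\overline{F}$), and a naive argument that simply picks one witness $K_i$ per $s$ cannot rule out that some \emph{other} $K_j$ contributes a forbidden non-forcible continuation at the same $s$. The fix is to observe that any such $K_j$ must itself contain $s$ in its prefix-closure, which forces the same universal clause on $K_j$, producing the contradiction. Once this is in place, the proof is complete.
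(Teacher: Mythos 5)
Your proposal is correct and follows essentially the same route as the paper's proof: negate the first disjunct of Definition~\ref{defn:forcon}, pick a component $K_{i^*}$ whose closure contains $s$ to obtain the forcible witness, and then handle the universal clause by contradiction, noting that any other $K_j$ contributing a non-forcible continuation must also contain $s$ in its closure and hence also satisfy the second disjunct. The only cosmetic difference is that you make the identity $\overline{\bigcup_i K_i} = \bigcup_i \overline{K_i}$ and the reason the first disjunct fails for every relevant $K_j$ (namely $su \notin \overline{K_j}$ for all $j$) more explicit than the paper does.
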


\begin{proof}
Let $K_i$ ($i \in \mathcal{I}$) be a forcibly-controllable sublanguage of $F$ w.r.t.\ $P$. First $\bigcup_{i \in \mathcal{I}} K_i$ is evidently a sublanguage of $F$. Now let $s \in \overline{\bigcup_{i \in \mathcal{I}} K_i}$. We need to show that $(\forall \sigma \in \Sigma_u)~[s\sigma \in L(P) \implies s\sigma \in \overline{\bigcup_{i \in \mathcal{I}} K_i}] \lor ((\exists f \in \Sigma_f)~[sf \in \overline{\bigcup_{i \in \mathcal{I}} K_i}] \land (\forall \sigma \in \Sigma\setminus\Sigma_f) ~[s\sigma \not\in \overline{\bigcup_{i \in \mathcal{I}} K_i}])$. This is equivalent to show that if 
\begin{equation}
    \neg (\forall \sigma \in \Sigma_u)~[s\sigma \in L(P) \implies s\sigma \in \overline{\bigcup_{i \in \mathcal{I}} K_i}] \label{as2}\end{equation}  then $(\exists {f \in \Sigma_f})~[sf \in \overline{\bigcup_{i \in \mathcal{I}} K_i}] \land (\forall \sigma \in \Sigma\setminus\Sigma_f) ~[s\sigma \not\in \overline{\bigcup_{i \in \mathcal{I}} K_i}]$.

Suppose that (\ref{as2}) holds. Since $s \in \overline{\bigcup_{i \in \mathcal{I}} K_i}$, we have that there exists $j\in \mathcal{I}$ such that $s \in \overline{K_j}$. 
Since $s \in \overline{K_j}$ and $K_j$ is forcibly-controllable, we have 
    \begin{itemize}
        \item $(\forall \sigma \in \Sigma_u)~[s\sigma \in L(P) \implies s\sigma \in \overline{K_j}] $, or
        \item $(\exists {f \in \Sigma_f})~[sf \in \overline{K_j}] \land (\forall \sigma \in \Sigma\setminus\Sigma_f) ~[s\sigma \not\in \overline{K_j}]$. 
    \end{itemize}
The first case is impossible due to (\ref{as2}). Thus we only need to consider the second case. 
 It follows from $(\exists {f \in \Sigma_f})~[sf \in \overline{K_j}]$ that $(\exists {f \in \Sigma_f})~[sf \in \overline{\bigcup_{i \in \mathcal{I}} K_i}]$. Now let $\sigma \in \Sigma \setminus \Sigma_f$. Thus $s\sigma \notin \overline{K_j}$. To show that $s\sigma \notin \overline{\bigcup_{i \in \mathcal{I}} K_i}$, we must prove $ s\sigma \notin \overline{\bigcup_{i \in \mathcal{I}} K_i} \setminus \overline{K_j}$. Suppose on the contrary that $s\sigma \in \overline{\bigcup_{i \in \mathcal{I}} K_i} \setminus \overline{K_j}$; then there exists $l \in \mathcal{I}$ such that $l \neq j$ and $s\sigma \in \overline{K_l}$. Hence $s \in \overline{K_l}$ Since $K_l$ is also forcibly-controllable, we have 
    \begin{itemize}
        \item $(\forall \sigma \in \Sigma_u)~[s\sigma \in L(P) \implies s\sigma \in \overline{K_l}] $, or
        \item $(\exists {f \in \Sigma_f})~[sf \in \overline{K_l}] \land (\forall \sigma \in \Sigma\setminus\Sigma_f) ~[s\sigma \not\in \overline{K_l}]$. 
    \end{itemize}
 The first case is again impossible due to (\ref{as2}). 
 From the second case, we derive $s\sigma \notin \overline{K_l}$, which directly contradicts the assumption that $s\sigma \in \overline{K_l}$. Therefore $s\sigma \notin \overline{\bigcup_{i \in \mathcal{I}} K_i} \setminus \overline{K_j}$ after all, and consequently $s\sigma \notin \overline{\bigcup_{i \in \mathcal{I}} K_i}$. This establishes that $\bigcup_{i \in \mathcal{I}} K_i$ is forcibly-controllable.
\end{proof}

Based on Proposition~\ref{prop:union}, $\mathcal{F}(F)$ in (\ref{forconfaimily}) is closed under set unions, and therefore contains a unique supremal element which is the union of all members in $\mathcal{F}(F)$:
\begin{align}
\sup \mathcal{F}(F) := \bigcup \{K \mid K \in \mathcal{F}(F)\}.
\end{align}
This $\sup \mathcal{F}(F)$ is the largest forcibly-controllable sublanguage of $F \subseteq L_m(P)$. If $F$ was already forcibly-controllable, then $\sup \mathcal{F}(F) = F$. 

\begin{theorem} \label{thm:sup}
Consider a plant $P$ and a specification $\varnothing \neq F \subseteq L_m(P)$.  Let $F_{\sup} := \sup \mathcal{F}(F)$. If $F_{\sup} \neq \emptyset$, then there exists a supervisory control $V_{\sup} : L(P) \rightarrow Pwr(\Sigma)$ such that $V_{\sup}/P$ is nonblocking and $L_m(V_{\sup}/P) = F$.
\end{theorem}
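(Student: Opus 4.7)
The plan is to derive the result as an immediate corollary of Proposition~\ref{prop:union} combined with the sufficiency direction of Theorem~\ref{thm:solvability}, applied to the supremal forcibly-controllable sublanguage rather than to $F$ directly.

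First, I would appeal to Proposition~\ref{prop:union} to conclude that $F_{\sup} = \bigcup \mathcal{F}(F)$ is itself a member of $\mathcal{F}(F)$, hence forcibly-controllable with respect to $P$, and therefore the unique largest forcibly-controllable sublanguage of $F$. Combined with the standing hypothesis $F_{\sup} \neq \emptyset$, this means that $F_{\sup}$ satisfies all the hypotheses of Theorem~\ref{thm:solvability}.

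Next, I would apply Theorem~\ref{thm:solvability} to the specification $F_{\sup}$. This yields a supervisory control $V_{\sup}$, obtained by instantiating formula~(\ref{eq:VV}) with every occurrence of $\overline{F}$ replaced by $\overline{F_{\sup}}$; the sufficiency argument in the proof of Theorem~\ref{thm:solvability} transfers verbatim, giving that $V_{\sup}/P$ is nonblocking and that $L(V_{\sup}/P) = \overline{F_{\sup}}$. No new induction is required at this step.

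Finally, I would reconcile the marked closed-loop language with the right-hand side of the claim. Under the convention $L_m(V/P) := L(V/P) \cap F$ from Section~\ref{section:forcingsetting}, substitution gives $L_m(V_{\sup}/P) = \overline{F_{\sup}} \cap F$, and a short supremality argument shows $\overline{F_{\sup}} \cap F = F_{\sup}$: for any $s \in \overline{F_{\sup}} \cap F$, the language $F_{\sup} \cup \{s\}$ is a sublanguage of $F$ whose prefix closure coincides with $\overline{F_{\sup}}$ (since $s$ is already a prefix of some string of $F_{\sup}$); because forcible-controllability depends only on the prefix closure, $F_{\sup} \cup \{s\} \in \mathcal{F}(F)$, and supremality of $F_{\sup}$ then forces $s \in F_{\sup}$. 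Hence $V_{\sup}$ realizes the maximally permissive forcibly-controllable realization of the specification, which coincides with $F$ precisely when $F$ was already forcibly-controllable.

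The main obstacle is the last step: the supremality argument identifying $\overline{F_{\sup}} \cap F$ with $F_{\sup}$. It is short but essential, as it is what links the supervisor constructed for $F_{\sup}$ back to the original specification $F$ used to define the marked language; everything else is a mechanical application of Proposition~\ref{prop:union} together with the sufficiency proof already carried out in Theorem~\ref{thm:solvability}.
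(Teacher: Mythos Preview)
Your approach is essentially identical to the paper's: its entire proof is the single sentence ``Since $F_{\sup}$ is forcibly-controllable and nonempty, the conclusion follows from Theorem~\ref{thm:solvability}.'' Your step~3 (the supremality argument that $\overline{F_{\sup}}\cap F = F_{\sup}$) is a careful addition the paper omits; indeed the theorem as printed reads $L_m(V_{\sup}/P)=F$, whereas both the paper's one-line proof and yours actually establish $L_m(V_{\sup}/P)=F_{\sup}$, so your closing caveat that equality with $F$ holds precisely when $F$ was already forcibly-controllable is exactly the right reading.
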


\begin{proof}
Since $F_{\sup}$ is forcibly-controllable and nonempty, the conclusion follows from Theorem~\ref{thm:solvability}.
\end{proof}

Theorem~\ref{thm:sup} generalizes the classical result in \cite{WR} with the forcing mechanism incorporated.  
The supervisory control $V_{\sup}$ in Theorem~\ref{thm:sup} realizing the supremal forcibly-controllable sublanguage of $F$ may also be implemented by a finite automaton, or a supervisor (cf. discussion at the end of Section~III). Owing to `supremum' of the realized language $F_{\sup}$, we call this supervisor the {\em maximally permissive supervisor}.  In the next section, we present an algorithm to synthesize such maximally permissive supervisors.

\begin{remark}
We underline an important point about `maximal permissiveness'. The `maximally permissive supervisor' we just mentioned has maximal permissiveness in two meanings. For the first (obvious) one, the language realized by this supervisor is the supremal forcibly-controllable sublanguage $F_{\sup}$. The second meaning of maximal permissiveness is that this supervisor permits `maximal freedom' of choosing forcible events for preempting. Indeed, whenever preempting is needed, all forcible events that can keep the supremal $F_{\sup}$ invariant are available in this suprevisor to be chosen. Having said the above, we point out that in implementation of forcing control, eventually only one forcible event (possibly among multiple valid choices) is chosen. In other words, one can only force one forcible event at a time, and in this sense `maximal permissiveness' is not possessed by our supervisor. In fact in this case, a maximally permissive supervisor generally does not exist. For theoretic soundness and correspondence with the conventional supervisory control, we adopt the term `maximally permissive supervisor' as that realizes the supremal forcibly-controllable sublanguage $F_{\sup}$.  
\end{remark}

Before ending this section, we collect several facts on forcibly-controllable and forcible sublanguages. Not all of them are needed in the sequel, but they are interesting in their own right.

\begin{fact}
Let plant $P$ be a finite automaton over $\Sigma = \Sigma_c \cup \Sigma_u$ and a forcible event set $\Sigma_f \subseteq \Sigma$. Also let $F \subseteq L_m(P)$. 
\begin{enumerate}
    \item The union of forcibly-controllable sublanguages of $F$ is also a forcibly-controllable sublanguage of $F$. 
    \item The union of forcible sublanguages of $F$ is also a forcible sublanguage of $F$. 
    \item The intersection of forcibly-controllable sublanguages of $F$ is not necessarily a forcibly-controllable sublanguage of $F$.
    \item The intersection of forcible sublanguages of $F$ is not necessarily a forcible sublanguage of $F$.
\end{enumerate}
\end{fact}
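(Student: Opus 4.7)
The plan is to dispatch the four items in turn, reusing the development already in the paper wherever possible.

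Item~1 is already settled: it is precisely Proposition~\ref{prop:union}, so no further work is needed.

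Item~2 can be proved by replaying the argument of Proposition~\ref{prop:union} with \emph{forcible} in place of \emph{forcibly-controllable}. Given forcible sublanguages $\{K_i\}_{i \in \mathcal{I}}$ of $F$ and a string $s \in \overline{\bigcup_{i \in \mathcal{I}} K_i}$, I would assume that the first disjunct of Definition~\ref{defn:for} fails for the union and derive the forcing disjunct. Since $s$ must lie in some $\overline{K_j}$ and the first clause also fails for $K_j$ (because $\overline{K_j} \subseteq \overline{\bigcup_i K_i}$), the forcing clause must hold for $K_j$, giving a witness $f \in \Sigma_f$ with $sf \in \overline{K_j} \subseteq \overline{\bigcup_i K_i}$. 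For any $\sigma \in \Sigma \setminus \Sigma_f$, the contradiction argument used in Proposition~\ref{prop:union} applies verbatim: any other $K_l$ containing $s\sigma$ must still satisfy the forcing disjunct at $s$, which forbids $s\sigma \in \overline{K_l}$. The only modification compared with Proposition~\ref{prop:union} is that the first disjunct now quantifies over all of $\Sigma$ rather than only $\Sigma_u$; this affects the opening hypothesis but leaves the remainder of the argument intact.

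For items~3 and~4 a single small counterexample serves both. Take $P$ with states $q_0, q_1, q_u, q_{f_1}, q_{f_2}$ and transitions $\delta(q_0,c)=q_1$, $\delta(q_1,u)=q_u$, $\delta(q_1,f_1)=q_{f_1}$, $\delta(q_1,f_2)=q_{f_2}$, where $c \in \Sigma_c \setminus \Sigma_f$, $u \in \Sigma_u$, $f_1, f_2 \in \Sigma_f$, and mark $q_1, q_{f_1}, q_{f_2}$. Set $F = L_m(P)$, $K_1 = \{c, cf_1\}$, and $K_2 = \{c, cf_2\}$. At the only nontrivial string $s = c$, the forcing disjunct of Definitions~\ref{defn:forcon} and~\ref{defn:for} holds for each $K_i$ via the witness $f_i$, so $K_1$ and $K_2$ are both forcibly-controllable and forcible. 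Their intersection $K_1 \cap K_2 = \{c\}$ has prefix closure $\{\epsilon, c\}$; at $s = c$ the event $u$ is eligible in $P$ but $cu \notin \overline{K_1 \cap K_2}$, and no forcible continuation of $c$ remains in $\overline{K_1 \cap K_2}$. Both disjuncts therefore fail, so $K_1 \cap K_2$ is neither forcibly-controllable (item~3) nor forcible (item~4).

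The main subtlety lies in item~2: when the first disjunct of Definition~\ref{defn:for} is strengthened from $\Sigma_u$ to all of $\Sigma$, one must check that the contradiction extracted from a hypothetical $K_l$ with $s\sigma \in \overline{K_l}$ still goes through. It does, because it hinges only on the forcing disjunct and on $\sigma \in \Sigma \setminus \Sigma_f$, both of which are unchanged from the proof of Proposition~\ref{prop:union}. For items~3 and~4 the only verification required is that the same compact plant-and-sublanguages witness refutes closure under intersection for both properties simultaneously, which the construction above does.
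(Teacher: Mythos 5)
Your proposal is correct and follows essentially the same route as the paper: item~1 is cited as Proposition~\ref{prop:union}, item~2 is obtained by rerunning that proof with the first disjunct quantified over all of $\Sigma$ (the paper merely calls this a ``straightforward corollary,'' and your check that the failure of the first disjunct for the union still forces the failure of the first disjunct for every $K_l$ containing $s$ is exactly the point that makes it go through), and items~3 and~4 are handled by a single two-forcible-events counterexample of the same shape as the paper's Example~\ref{ex:forcon_inter}. One small blemish: in your counterexample the state $q_u$ is an unmarked deadlock, so $P$ is blocking, whereas Definitions~\ref{defn:forcon} and~\ref{defn:for} are stated for a (nonblocking) plant; marking $q_u$ (and taking $F=\{c,cf_1,cf_2\}$ explicitly) repairs this without changing the argument, and the paper's own example avoids the issue by putting the forcible events and $u$ directly after $\varepsilon$.
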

   
The first fact above is Proposition~\ref{prop:union}, while the second about forcible sublanguages is a straightforward corollary of Proposition~\ref{prop:union}.
The third and fourth facts are illustrated by
Example~\ref{ex:forcon_inter} below. 
These facts together indicate (in terms of algebraic structures) that the set of all forcibly-controllable (or forcible) sublanguages of a given language $F$ is a {\em complete upper semilattice} with the set union operation of the complete lattice of all sublanguages of $F$.
    
\begin{example} \label{ex:forcon_inter}
Consider plant $P$ with $L(P) = L_m(P) = \{\varepsilon, f_1, f_2, u\}$ and $F = \{\varepsilon, f_1, f_2\}$; here $f_1,f_2$ are forcible and $u$ is uncontrollable. 
Also consider two sublanguages $K_1 = \{ \varepsilon,  f_1 \}$ and $K_2 = \{ \varepsilon, f_2 \}$ of $F$. Now, it can be verified that both $K_1$ and $K_2$ are forcibly-controllable and forcible (without controllable events, forcible-controllability and forcibility are equivalent: {\bf Fact 2}). However, $K_1 \cap K_2 = \{ \varepsilon \}$ is neither forcibly-controllable nor forcible (due to the absence of forcible events and the presence of the uncontrollable event $u$ enabled after $\varepsilon$ in the plant).
\end{example}

\section{Synthesizing the maximally permissive forcibly-controllable nonblocking supervisor}
\label{section:algorithm}

In this section we present an algorithm to compute maximally permissive supervisors introduced in the preceding section (see in particular Theorem~\ref{thm:sup} and the paragraph following Theorem~\ref{thm:sup}). 

So far we have considered the setting where a plant automaton $P$ and a specification language $F \subseteq L_m(P)$ are given. Now assume that $F$ is a regular sublanguage, so $F$ can be represented by a finite automaton. 
It is well-known that the conventional synthesis problem with a regular specification language is easily transformed into a synthesis problem where only a plant automaton is considered by applying a so-called {\em plantification transformation} on the finite automaton representing the specification \cite{flordal2007compositional}. This leads to the simplified (but equivalent) problem of supervisory control synthesis below.

\begin{problem}[Synthesis of maximally permissive forcibly-controllable nonblocking supervisor]
Given a plant automaton $P=(Q,\Sigma,\delta,q_0,Q_m)$ with sets $\Sigma_u \subseteq \Sigma$ of uncontrollable events and $\Sigma_f \subseteq \Sigma$ of forcible events, synthesize a maximally permissive, forcibly-controllable, nonblocking supervisor $S$ for $P$.
\end{problem}

In Algorithm \ref{alg:scs_fc} below, a maximally permissive, forcibly-controllabe, nonblocking supervisor is computed starting from the plant automaton (as is common in SCT). Its structure is based on the well-known algorithm for synthesis of maximally permissive supervisors for EFA (without using variables though) \cite{ouedraogo2011nonblocking}. Besides sets of nonblocking and bad states, in this case also sets of states $F_k$ and $F_k^j$ where forcing is applied are maintained.

\renewcommand{\algorithmicrequire}{\textbf{Input:}}
\renewcommand{\algorithmicensure}{\textbf{Output:}}

\begin{algorithm*}
\caption{Supervisory controller synthesis for maximally permissive, forcibly-controllable, nonblocking supervisor}\label{alg:scs_fc}
\begin{algorithmic}[1]
\Require plant $P = (Q,\Sigma,\delta,q_0,Q_m)$, uncontrollable events $\Sigma_u$, \textcolor{red}{forcible events $\Sigma_f$}
\Ensure $S$ is maximally permissive, \textcolor{red}{forcibly-}controllable, nonblocking supervisor for $P$ if it exists, and empty supervisor otherwise

\State $Q_0 \gets Q$
\State \textcolor{red}{$F_0 \gets \varnothing$} \label{lineF0}
\State \textcolor{red}{$\delta_0 \gets \delta$} \qquad  \hfill \textcolor{green}{$S_0 \gets (Q_0,\Sigma,\delta_0,q_0,Q_0 \cap Q_m)$}
\State $k\gets 0$ 
\Repeat \label{automatoniterationstart}
\State $NB_{k+1}^0 \gets Q_m \cap Q_k$
\State $l \gets 0$ 
\Repeat \label{startnonblocking}
\State $\begin{array}[t]{@{}lcl} NB_{k+1}^{l+1} &\gets& NB_{k+1}^l \\
&\cup& \{ q \in Q_k \mid (\exists {\sigma \in \Sigma})~[\delta_{\textcolor{red}{k}}(q,\sigma) \in NB_{k+1}^l] \}\end{array}$
\State $l \gets l+1$
\Until $NB_{k+1}^{l} = NB_{k+1}^{l-1}$ \label{endnonblocking}
\State $NB_{k+1} \gets NB_{k+1}^l$
\State $B_{k+1}^0 \gets Q_k \setminus \NB_{k+1}$ \label{lineB0k+1}
\State $\textcolor{red}{F_{k+1}^0 \gets F_{k} \cap Q_k}$
\label{lineF0k+1}
\State $j \gets 0$
\Repeat \label{startbadstate}
\State \label{lineBj+1k+1} $\begin{array}[t]{@{}lcl} B_{k+1}^{j+1} &\gets& B_{k+1}^j \cup  \left\{ q \in Q_k \left|  (\exists {u \in \Sigma_u})~[\delta_k(q,u) \in B_{k+1}^j] 
\textcolor{red}{{} \land (\forall f \in \Sigma_f)~[\delta_k(q,f) \in B_{k+1}^{j}]} \right. \right\} \\
\end{array}$
\State \label{lineFj+1k+1} \textcolor{red}{$\begin{array}[t]{@{}lcl}F_{k+1}^{j+1} &\gets& F_{k+1}^j 
\cup {} \left\{ q \in Q_k \left|  (\exists {u \in \Sigma_u})~[\delta_k(q,u) \in B_{k+1}^{j}] \land {} 
(\exists {f \in \Sigma_f})~[\delta_k(q,f) \not\in B_{k+1}^{j}]  \right. \right\} \end{array}$}
\State $j \gets j+1$
\Until $B_{k+1}^j = B_{k+1}^{j-1} \textcolor{red}{~\land ~ (\forall q \in F_{k+1}^{j} \setminus B_{k+1}^{j})~[(\exists {f \in \Sigma_f})~[\delta_k(q,f) \not\in B_{k+1}^{j}]]}$ \label{endbadstate}
\State $B_{k+1} \gets B_{k+1}^j$ \label{lineBk+1}
\State \textcolor{red}{$F_{k+1} \gets F_{k+1}^{j}$} \label{lineFk+1}
\State $Q_{k+1} \gets Q_k \setminus B_{k+1}$ \label{lineQk+1}
\State \textcolor{red}{$\delta_{k+1} \gets (\delta_k \cap (Q_{k+1} \times \Sigma \times Q_{k+1})) \setminus (F_{k+1} \times \Sigma\setminus\Sigma_f \times Q_{k+1})$} \label{linedeltak+1} \hfill \textcolor{green}{$S_{k+1} \gets (Q_{k+1},\Sigma,\delta_{k+1},q_0,Q_m \cap Q_{k+1})$}
\State $k \gets k+1$
\Until $Q_k = Q_{k-1} \textcolor{red}{{} \land \delta_{k} = \delta_{k-1}}$ \label{automatoniterationend}
\end{algorithmic}
\end{algorithm*}

Controllable sublanguages have the transitivity property, i.e., whenever $F_1$ is a controllable sublanguage of $F_2$ and $F_2$ is a controllable sublanguage of $F_3$, then also $F_1$ is a controllable sublanguage of $F_3$.
A similar useful property does not hold for forcibly-controllable sublanguages. Consider the languages $F_1 = \{ \varepsilon \}$, $F_2 = \{ f \}$, and $F_3 = \{ f,u \}$ where $f \in \Sigma_f$ and $u \in \Sigma_u$. It is easily verified that $F_1$ is a forcibly-controllable sublanguage of $F_2$ and $F_2$ is a forcibly-controllable sublanguage of $F_3$. However $F_1$ is not a forcibly-controllable sublanguage of $F_3$.

The reason for the absence of this property is the situation that a forcible event ($f$) that has been used to preempt an uncontrollable event ($u$) is later disabled (which means that the uncontrollable event in hindsight cannot be preempted and thus has to be maintained. 

This difference between traditional SCT and SCT with forcible events results in the need to maintain a set of states from which forcing (by a forcible event) has been used to preempt an uncontrollable transition to a bad state. As soon as all forcible events from such a forcing state need to be disabled (because they all lead to a bad state) the forcing state itself becomes bad (Line \ref{lineBj+1k+1}).

\begin{theorem}[Termination]
Algorithm \ref{alg:scs_fc} terminates.
\end{theorem}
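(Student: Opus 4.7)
The proof will handle the three loops in Algorithm~\ref{alg:scs_fc} separately: the inner non-blocking loop on lines \ref{startnonblocking}--\ref{endnonblocking}, the inner bad-state/forcing loop on lines \ref{startbadstate}--\ref{endbadstate}, and the outer loop on lines \ref{automatoniterationstart}--\ref{automatoniterationend}. In each case termination will be established by combining finiteness of the underlying state set and transition relation with monotonicity of the iterates.

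For the non-blocking loop, I would observe that $NB_{k+1}^0 \subseteq NB_{k+1}^1 \subseteq \cdots \subseteq Q_k$ is a monotonically increasing sequence of subsets of the finite set $Q_k \subseteq Q$. Hence the chain must stabilize after at most $|Q_k|$ iterations, at which point the exit condition $NB_{k+1}^l = NB_{k+1}^{l-1}$ is met.

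For the bad-state/forcing loop, the iterates $B_{k+1}^j$ and $F_{k+1}^j$ are, by construction on lines \ref{lineBj+1k+1} and \ref{lineFj+1k+1}, both monotonically increasing sequences of subsets of $Q_k$, hence stabilize on some $j^{\ast}$. The subtle part I would have to argue carefully is that the compound exit condition on line \ref{endbadstate} is satisfied at this $j^{\ast}$. Once $B_{k+1}^{j^\ast} = B_{k+1}^{j^\ast-1}$, the defining formula for $F^{j+1}$ depends only on $B^{j}$ and so $F^{j}$ also stabilizes. For the second conjunct, suppose $q \in F_{k+1}^{j^\ast} \setminus B_{k+1}^{j^\ast}$; if no forcible event $f$ satisfied $\delta_k(q,f) \notin B_{k+1}^{j^\ast}$, then combined with the existence of an uncontrollable $u$ with $\delta_k(q,u) \in B_{k+1}^{j^\ast}$ (which is witnessed by the fact that $q$ was ever placed in $F$, and is preserved since $B$ only grows), the rule on line \ref{lineBj+1k+1} would have added $q$ to $B_{k+1}^{j^\ast+1}$, contradicting stability of $B$ together with $q \notin B_{k+1}^{j^\ast}$. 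This yields the required forcing witness.

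For the outer loop, I would exhibit a strictly decreasing well-founded measure. Line \ref{lineQk+1} gives $Q_{k+1} \subseteq Q_k$ and line \ref{linedeltak+1} gives $\delta_{k+1} \subseteq \delta_k$, so the pair $(Q_k,\delta_k)$ is monotonically non-increasing under componentwise set inclusion. The outer exit test requires \emph{both} $Q_k = Q_{k-1}$ \emph{and} $\delta_k = \delta_{k-1}$; therefore whenever the loop does not exit, at least one of these inclusions is strict, and the integer measure $\mu_k := |Q_k| + |\delta_k|$ strictly decreases. Since $\mu_k$ is bounded below by $0$ and above by $|Q| + |\delta_0|$, the outer loop terminates in at most $|Q| + |\delta_0|$ iterations. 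The main obstacle I anticipate is the fixed-point justification of the compound stopping condition inside the bad-state loop; the remaining termination arguments are essentially the standard monotone-chain argument on finite posets.
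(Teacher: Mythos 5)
Your proof is correct and follows essentially the same three-loop decomposition as the paper's: monotone chains in a finite state set for the two inner loops, and non-increasing $(Q_k,\delta_k)$ with the compound exit test for the outer loop. Your fixed-point contradiction argument for the second conjunct of the bad-state loop's guard is just the contrapositive of the paper's observation that a violation of that conjunct forces at least one new state into $B_{k+1}^{j+1}$ on the next pass, so the two arguments coincide in substance.
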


\begin{proof} 
First, consider the iteration represented by the repeat-until in Lines \ref{startnonblocking} - \ref{endnonblocking}. In each execution of the body of this repeat-until either a state is added to a set $\NB_{k+1}^{l+1}$ (compared to the set of nonblocking states computed in the previous execution of that body $\NB_{k+1}^l$), or the set $\NB_{k+1}^{l+1}$ is the same as the set $\NB_{k+1}^{l}$. The latter leads to termination, and the former can only take place finitely often since $\NB_{k+1}^{l} \subseteq Q_k \subseteq Q$ for all $k$ and $l$ and the set of states $Q$ is finite.

Second, consider the iteration represented by the repeat-until in Lines \ref{startbadstate} - \ref{endbadstate}. Again, as in the previous case, each execution of the body of the repeat-until results in adding at least one state to the set $B_{k+1}^{j+1}$ (compared to the `previous' set $B_{k+1}^j$) or results in the same set of bad states. The former situation can only take place finitely often as $B_{k+1}^j \subseteq Q_k \subseteq Q$ for all $k$ and $j$. The latter situation leads to termination in case the other condition of the guard in Line \ref{endbadstate} is satisfied. If that condition is not satisfied, then $(\exists {q \in F_{k+1}^j \setminus B_{k+1}^j})~[(\forall f \in \Sigma_f)~[\delta_k(q,f) \in B^{j}_{k+1}]]$. But then, in the next execution of the body of the repeat-until, at least one state is added to the set of bad states of that iteration. 

Finally, consider the outermost repeat-until construction (Lines \ref{automatoniterationstart} - \ref{automatoniterationend}). Invariantly it is the case that $Q_{k+1} \subseteq Q_{k}$ and $\delta_{k+1} \subseteq \delta_k$ for all $k$. Now, given the computations of $Q_{k+1}$ in Line \ref{lineQk+1} and $\delta_{k+1}$ in Line \ref{linedeltak+1}, there are three cases:
\begin{itemize}
    \item $Q_{k+1} \subset Q_k$
    \item $\delta_{k+1} \subset \delta_k$
    \item $Q_{k+1} = Q_k$ and $\delta_{k+1} = \delta_k$.
\end{itemize}
Therefore, either the repeat-until terminates because the first two cases can only be repeated finitely often (there is only a finite number of states and a finite number of transitions in the input plant), or the termination condition of the repeat-until becomes true and the repeat-until terminates.
\end{proof}

\begin{theorem}[Correctness]
Algorithm \ref{alg:scs_fc} computes a maximally permissive, forcibly-controllable, nonblocking supervisor.
\end{theorem}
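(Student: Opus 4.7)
The plan is to verify three separate claims for the output automaton $S = S_K$, where $K$ is the outer termination index: (a) $S$ is nonblocking; (b) $L_m(S)$ is forcibly-controllable w.r.t.\ $P$; (c) $\overline{T} \subseteq L(S)$ for every forcibly-controllable sublanguage $T \subseteq L_m(P)$. In combination with Proposition~\ref{prop:union} and Theorem~\ref{thm:sup}, these three claims establish that $L_m(S) = \sup\mathcal{F}(L_m(P))$, which is precisely the maximal permissiveness required.

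For claim (a), the inner repeat-until in Lines~\ref{startnonblocking}--\ref{endnonblocking} is the standard coreachability fixed point applied to $\delta_k$, so at its exit $\NB_{k+1}$ is exactly the set of states in $Q_k$ that can reach $Q_m \cap Q_k$ via $\delta_k$. Since the outer loop terminates only when $Q_K = Q_{K-1}$ and $\delta_K = \delta_{K-1}$, the coreachability computation at the final iteration eliminates nothing, so every state of $S_K$ can reach a marker state through transitions of $S_K$ itself; lifting this to languages gives $L(S_K) = \overline{L_m(S_K)}$, which is the definition of nonblocking.

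For claim (b), at the exit of the bad-state loop the two conjuncts of the termination guard in Line~\ref{endbadstate} jointly guarantee that every $q \in Q_{k+1}$ satisfies one of the following: either no $\Sigma_u$-transition of $\delta_k$ from $q$ leaves $Q_{k+1}$ (by the rule for $B_{k+1}^{j+1}$ in Line~\ref{lineBj+1k+1}), or $q \in F_{k+1}$ retains at least one $f \in \Sigma_f$ with $\delta_k(q,f) \in Q_{k+1}$ (by the second conjunct of the termination condition). In the latter case, Line~\ref{linedeltak+1} strips $q$ of all non-forcible outgoing transitions, so the remaining out-transitions from $q$ in $\delta_{k+1}$ lie in $\Sigma_f$ only. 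At the outer fixed point, this restriction has stabilized. Translating to strings, each $s \in \overline{L_m(S)}$ leads to a unique state $q$; if the first case applies then the first disjunct of Definition~\ref{defn:forcon} holds at $s$, and otherwise the forcible $f$ above witnesses $sf \in \overline{L_m(S)}$ while the stripping of non-forcible transitions ensures $s\sigma \notin \overline{L_m(S)}$ for every $\sigma \in \Sigma \setminus \Sigma_f$, which is exactly the second disjunct.

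Claim (c) is the main obstacle and requires an invariant carried along the outer loop. I would prove by induction on $k$ that for every forcibly-controllable sublanguage $T \subseteq L_m(P)$ the prefix closure $\overline{T}$ is contained in $L(S_k)$, and every transition used by $\overline{T}$ is preserved in $\delta_k$. The base case $k = 0$ is immediate from $S_0 = P$. For the inductive step, let $s \in \overline{T}$ reach $q$ in $S_k$; I must show $q \notin B_{k+1}$ and that every extension $s\sigma \in \overline{T}$ uses a transition preserved in $\delta_{k+1}$. A nested sub-induction on $j$ shows $q \notin B_{k+1}^j$: if $\delta_k(q,u) \in B_{k+1}^j$ for some $u \in \Sigma_u \cap E_P(s)$, then by the sub-inductive hypothesis $su \notin \overline{T}$, forcing the second disjunct of Definition~\ref{defn:forcon} at $s$, which supplies an $f \in \Sigma_f$ with $sf \in \overline{T}$ and therefore places $q$ in $F_{k+1}^{j+1} \setminus B_{k+1}^{j+1}$; the same disjunct yields $s\sigma \notin \overline{T}$ for all $\sigma \in \Sigma \setminus \Sigma_f$, matching exactly the pruning in Line~\ref{linedeltak+1} so no transition of $\overline{T}$ is removed. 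The delicate point, which motivates the outermost loop in the first place, is that this pruning can render previously coreachable states blocking in later iterations, so the invariant must be propagated jointly through the nonblocking loop, the bad-state loop, the $F$-update, and the $\delta$-restriction; this simultaneous propagation is where the forcible setting departs from the classical Wonham-style argument and where most of the bookkeeping work in the proof will live.
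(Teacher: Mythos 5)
Your decomposition into (a) nonblocking, (b) forcible-controllability of the output, and (c) survival of every forcibly-controllable sublanguage matches the paper's three-part structure, and parts (a) and (b) track the paper's argument closely. Where you genuinely diverge is in part (c): the paper proves maximal permissiveness by contradiction (``suppose a larger $S'$ exists'') and then essentially asserts that states and transitions are only removed when necessary, deferring the state-removal case to the EFA synthesis reference; your invariant --- that $\overline{T}\subseteq L(S_k)$ and all transitions used by $\overline{T}$ are preserved in $\delta_k$, proved by induction on $k$ with a nested sub-induction on $j$ over the bad-state loop --- makes that assertion precise and is, in my view, the more convincing route. Your observation that the second disjunct of Definition~\ref{defn:forcon} simultaneously supplies the surviving forcible successor (so $q\notin B_{k+1}^{j+1}$) and certifies that the pruning in Line~\ref{linedeltak+1} removes no transition of $\overline{T}$ is exactly the point the paper's informal ``transition removed'' case is gesturing at.

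One imprecision to repair in part (b): you state the dichotomy as ``no $\Sigma_u$-transition of $\delta_k$ from $q$ leaves $Q_{k+1}$'' versus ``$q\in F_{k+1}$ retains a forcible successor.'' The first disjunct of Definition~\ref{defn:forcon} quantifies over $u\in E_P(s)$, i.e., over the uncontrollable transitions of the \emph{original} plant $\delta_0$, not of the current $\delta_k$. A single-iteration argument therefore does not suffice: you must show that a state that is non-forcing in the final automaton has never, in any earlier iteration, lost an uncontrollable successor. This is precisely the paper's cumulative invariant (Equation~\ref{invariantcontrollability}), and it holds because the forcing sets are monotone (Line~\ref{lineF0k+1} gives $F_{k+1}^0 = F_k\cap Q_k$, so a state outside $F_K$ was never subject to the stripping in Line~\ref{linedeltak+1}) and because losing an uncontrollable successor forces a state into $B_{k+1}$ or $F_{k+1}$ by Lines~\ref{lineBj+1k+1}--\ref{lineFj+1k+1}. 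Folding that cumulative statement into your outer induction closes the gap; the rest of your sketch is sound.
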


\begin{proof} 
For presenting the proof (of forcible-controllability) we introduce thought variables $S_k = (Q_k,\Sigma,\delta_k,q_0,Q_k \cap Q_m)$ which represent the subautomata of $P$ derived after iteration $k$ of the outer iteration. See the green lines of code and note that the variables $S_k$ are never used to update any other variable, nor to evaluate a condition in the control flow of the pseudo code. Consequently, these lines have no impact on the outcome of the algorithm and are only used for facilitating the proof.

\paragraph*{Nonblocking}
Upon termination of Algorithm \ref{alg:scs_fc}, say after $K$ iterations, we have $Q_K = Q_{K-1}$ and $\delta_K = \delta_{K-1}$. This can only be the case if $B_K= \varnothing$ and therefore only if $NB_K = Q_K$. So, if we prove that $NB_{K} = \{ q \in Q_{K-1} \mid q \mbox{ is nonblocking }\}$ then we are done, since $Q_{K-1} = Q_K$. The pseudo code computing the sets of nonblocking states is fairly standard and it has been shown before that indeed the set of nonblocking states is computed. See for example \cite{ouedraogo2011nonblocking} or \cite{Thuijsman2020} for a statement to this effect.

\paragraph*{Forcible-controllability}
We will prove that the output of the algorithm is forcibly-controllable w.r.t.\ the input plant $P$. First, we prove the properties presented in Equations \ref{forcinginvariant} and \ref{invariantcontrollability} below.
\begin{equation}\label{forcinginvariant}
\begin{array}{l}
(\forall 0 \leq k \leq K,~\forall q \in F_k\setminus B_k) \\ 
\hspace*{2cm} \begin{array}{l}
[(\exists {f \in \Sigma_f})~[\delta_k(q,f) \in Q_k] \\
\land \\ (\forall n \in \Sigma\setminus\Sigma_f)~[\neg\delta_k(q,n)! ]]. 
\end{array}
\end{array}
\end{equation}

The proof of the property in Equation \ref{forcinginvariant} is by induction on $k$. For the base case $k=0$ this follows immediately from the fact that $F_0 = \varnothing$ (Line \ref{lineF0}). For the induction case, assume that $(\forall q \in F_k\setminus B_k)~[(\exists {f \in \Sigma_f})~[\delta_k(q,f) \in Q_k] \land (\forall n \in \Sigma\setminus\Sigma_f)~[\neg\delta_k(q,n)! ]]$. Let $q \in F_{k+1}\setminus B_{k+1}$. Then, $q \in F_{k+1}^{J}$ and $q \not\in B_{k+1}^J$ for the $J$ that corresponds with  the value of $j$ upon termination of this iteration of the repeat-until statement in Lines \ref{startbadstate} - \ref{endbadstate}. Then, also $(\forall q \in F_{k+1}^J\setminus B_{k+1}^J)~[(\exists {f \in \Sigma_f})~[\delta_k(q,f) \not\in B_{k+1}^J]]$ as this is part of the termination condition of the repeat-until. Then, as $q \in F_{k+1}^J \setminus B_{k+1}^J$, we have $(\exists {f \in \Sigma_f})~[\delta_k(q,f) \not\in B_{k+1}^J]$. Then also, following Line \ref{lineQk+1}, $(\exists {f \in \Sigma_f})~[\delta_k(q,f) \in Q_{k+1}]$. Let $n \in \Sigma\setminus\Sigma_f$. As $q \in F_{k+1}$, from Line \ref{linedeltak+1}, it follows that $\neg \delta_{k+1}(q,n)!$.
Consequently, Equation \ref{forcinginvariant} holds.

\begin{equation}
\begin{array}{l}
(\forall 0 \leq k \leq K, ~\forall q \in Q_k \setminus F_k,~\forall u \in \Sigma_u) \\
\hspace*{2cm}
[\delta_0(q,u) \in Q_0 \implies \delta_k(q,u) \in Q_k] 
\end{array}
    \label{invariantcontrollability}
\end{equation}

The proof of this property is by induction on $k$. The base case $k=0$ follows trivially. For the induction step, assume that $(\forall q \in Q_k \setminus F_k, ~\forall {u \in \Sigma_u})~[\delta_0(q,u) \in Q_0 \implies \delta_k(q,u) \in Q_k]$. Let $q \in Q_{k+1}\setminus F_{k+1}$ and $u \in \Sigma_u$. Assume that $\delta_0(q,u) \in Q_0$. As $q \not\in F_{k+1}$, it follows that $\delta_{k+1}(q,u) = \delta_k(q,u)$ provided that $\delta_{k+1}(q,u) \in Q_{k+1}$. Towards a contradiction assume that $\delta_k(q,u) \in Q_k\setminus Q_{k+1}$. Then $\delta_k(q,u) \in B_{k+1}$. Then $\delta_k(q,u) \in B_{k+1}^J$ and because of the termination condition of the repeat-until, also $\delta_k(q,u) \in B_{k+1}^{J-1}$. But then, either $q \in B_{k+1}^J$ or $q \in F_{k+1}^J$. This contradicts the assumption that $q \in Q_{k+1}\setminus F_{k+1}$.

Together the properties in Equation \ref{forcinginvariant} and Equation \ref{invariantcontrollability}, show that each $S_k$ as produced during the execution of the algorithm is forcibly-controllable w.r.t.\ $P$. Each of the states of $S_k$, i.e., the set $Q_k$, either respects controllability (Equation \ref{invariantcontrollability} holds for the states from $Q_k\setminus F_k$) or is forcing and respects Equation \ref{forcinginvariant} (for states from $F_k$ that are in $Q_k$).

\paragraph*{Maximally permissive}
Suppose there is a larger nonblocking, forcibly-controllable subautomaton $S'$ for $P$. Then, it contains a state that $S$ does not, or, if the former does not hold (and both $S'$ and $S$ have the same set of states), it contains a transition between states they both have that $S$ does not have.
Maximal permissiveness is achieved by construction. In the algorithm, states or transitions are only removed if there is necessity for that.

Let us consider the former case. $S'$ contains a state that is not present in $S$. Then this state is removed (in the computation of $S$) in some invocation of Line \ref{lineQk+1}. Thus this state is in $B_{k+1}$, i.e., a bad state. We need to show that any addition of a state to the set of bad states cannot be avoided since otherwise the resulting supervisor violates nonblockingness or forcible-controllability. Proof of this follows the same reasoning as in the underlying algorithm used for synthesis of EFA \cite{ouedraogo2011nonblocking}.

Now, let us consider the latter case. A transition is present in $S'$ but not in $S$. This has to be due to Line \ref{linedeltak+1}. Therefor this concerns a transition from a forcing state labelled by an nonforcible event. Since the source state is forcing (i.e. if it does not exercise its forcing action it is deemed to be bad) we know that this transition would violate forcible-controllability. 
\end{proof}

\begin{theorem}[Complexity]
Algorithm \ref{alg:scs_fc} has worst-case time complexity ${\cal O}(|Q|^2 ~| \Sigma |)$.
\end{theorem}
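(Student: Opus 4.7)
The plan is to bound the total running time by separately analyzing the number of outer iterations (lines \ref{automatoniterationstart}--\ref{automatoniterationend}) and the cost of the body of a single outer iteration, then multiplying the two bounds.

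First I would argue that each execution of the outer body costs $\mathcal{O}(|Q||\Sigma|)$. The inner repeat-until in lines \ref{startnonblocking}--\ref{endnonblocking} is a standard backward reachability computation from $Q_m \cap Q_k$ through $\delta_k$; implemented with a worklist it inspects each transition a constant number of times, so its total cost is $\mathcal{O}(|Q||\Sigma|)$. The inner repeat-until in lines \ref{startbadstate}--\ref{endbadstate} likewise performs a backward propagation through uncontrollable transitions with the side condition on forcible successors; this can be implemented by maintaining, for each state $q$, a counter of forcible successors that still lie outside the current $B_{k+1}^j$ and decrementing it whenever a new state enters $B_{k+1}^j$. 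Each transition then contributes $\mathcal{O}(1)$ amortized work, again giving $\mathcal{O}(|Q||\Sigma|)$ per invocation.

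The crux is bounding the number of outer iterations by $\mathcal{O}(|Q|)$. By the termination condition on line \ref{automatoniterationend}, a non-final iteration must either (a) strictly shrink $Q_k$, or (b) leave $Q_k$ unchanged and strictly shrink $\delta_k$. Case (a) clearly occurs at most $|Q|$ times. For case (b), since no state is removed, the only way $\delta_{k+1}$ can lose a transition $(q,\sigma,q')$ relative to $\delta_k$ is through the set-difference with $F_{k+1} \times (\Sigma\setminus\Sigma_f) \times Q_{k+1}$ in line \ref{linedeltak+1}, so $\sigma \notin \Sigma_f$ and $q \in F_{k+1}$. Because the previous iteration already applied the analogous set-difference with $F_k$, we must have $q \in F_{k+1} \setminus F_k$, i.e., $q$ newly joined the forcing set. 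Since $F_0 = \varnothing$ (line \ref{lineF0}) and the initialization $F_{k+1}^0 \gets F_k \cap Q_k$ on line \ref{lineF0k+1} means a state once in $F_k$ never leaves $F$ unless it also leaves $Q$, each state becomes newly forcing at most once across the whole run, so case (b) also occurs at most $|Q|$ times.

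Combining the two bounds yields $\mathcal{O}(|Q|) \cdot \mathcal{O}(|Q||\Sigma|) = \mathcal{O}(|Q|^2|\Sigma|)$. The main obstacle is the amortization in case (b): a naive counting of individual transition removals would suggest up to $|Q||\Sigma|$ outer iterations and the weaker bound $\mathcal{O}(|Q|^2|\Sigma|^2)$. The point that must be made carefully is that all non-forcible transitions out of a newly forcing state disappear simultaneously in line \ref{linedeltak+1}, so transition removals are charged to state-level events (removal from $Q$ or entry into $F$), each of which occurs at most $|Q|$ times.
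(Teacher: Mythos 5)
Your proof is correct, and it is considerably more substantive than the paper's own, which simply notes that the black part of Algorithm \ref{alg:scs_fc} is the standard synthesis algorithm with known complexity ${\cal O}(|Q|^2\,|\Sigma|)$ (citing \cite{ramadge1989control}) and asserts that the red additions ``obviously'' do not affect this bound. Your decomposition --- ${\cal O}(|Q|)$ outer iterations times ${\cal O}(|Q|\,|\Sigma|)$ per iteration --- is the standard one, but the point you isolate is precisely the one the paper's ``obviously'' glosses over: the forcing mechanism introduces a genuinely new way for the outer loop to make progress, namely iterations in which $Q_{k+1}=Q_k$ and only transitions are deleted in line \ref{linedeltak+1}. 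Without further argument such iterations could number up to $|\delta|={\cal O}(|Q|\,|\Sigma|)$, yielding only the weaker bound ${\cal O}(|Q|^2|\Sigma|^2)$. Your amortization closes this: since $\delta_k$ already excludes $F_k\times(\Sigma\setminus\Sigma_f)\times Q_k$, a transition-only iteration forces some state to lie in $F_{k+1}\setminus F_k$, and by lines \ref{lineF0} and \ref{lineF0k+1} a state enters the forcing set at most once over the whole run (it can leave $F$ only by leaving $Q$, never to return), so such iterations number at most $|Q|$. In short, the paper buys brevity by deferring to the classical result; your argument actually justifies the claim that the red additions are complexity-neutral, which is the part of the statement that genuinely needs proof.
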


\begin{proof}
The black part of the algorithm is basically the standard algorithm for supervisory control synthesis without event forcing and it has been established to have a worst-case time complexity of ${\cal O}(|Q|^2 ~| \Sigma |)$  in \cite{ramadge1989control}. Obviously, the red additions (for the purpose of forcing supervisory control) do not negatively, nor positively, affect this worst-case time complexity.
\end{proof}

\begin{remark}
When Algorithm \ref{alg:scs_fc} is applied with $\Sigma_f = \varnothing$, it in fact computes the maximally permissive, controllable, nonblocking supervisor. All sets of states where forcing is applied are empty and hence the red additions to the bad state computation do not add anything. Moreover, there is no need to change the transition relation (and the subscript can be removed).
\end{remark}

\begin{remark}
Replacing all occurrences of the set $\Sigma_u$ in Algorithm \ref{alg:scs_fc} (in Lines \ref{lineBj+1k+1} and \ref{lineFj+1k+1}) by $\Sigma$ results in an algorithm for the maximally permissive, forcible, nonblocking supervisor. 
\end{remark}

\section{Case study - small manufacturing line}
\label{section:smallmanufacturingsystem}

We consider the example of the small manufacturing line that was also used in \cite[section 3.8]{bookcai}. There are two machines ($M_1$ and $M_2$) and one (plantified) specification (namely that finishing processing on machine $M_1$ and starting processing on machine $M_2$ alternate). See Fig.~\ref{plant} for the graphically represented automata. The specification has been formulated in such a way that it is controllable (and hence forcibly-controllable as well).  For now, we consider only the event $start\_M_2$ representing the start of processing on machine $M_2$ as forcible.

\begin{figure}[htbp]
\begin{center}
\begin{tikzpicture}[auto,->,node distance=2.5cm,align=center,font=\smaller\it, uncontrollable/.style={densely dashed}]
        \node[state,minimum size=1cm] (S1) {\it Busy};
        \node[state,initial above,accepting,initial text = {$M_1$}, minimum size=1cm,left of=S1] (S0) {\it Idle};

        \path[->,shorten >=1pt]
            (S0) edge[bend2,above] node{$start\_M_{1}$} (S1)
            (S1) edge[bend2,below,uncontrollable] node{$end\_M_{1}$} (S0)
            ;
    \end{tikzpicture}
\qquad
\begin{tikzpicture}[auto,->,node distance=2.5cm,align=center,font=\smaller\it, uncontrollable/.style={densely dashed}]
        \node[state,minimum size=1cm] (S1) {\it Busy};
        \node[state,initial above,accepting,initial text = {$M_2$}, minimum size=1cm,left of=S1] (S0) {\it Idle};

        \path[->,shorten >=1pt]
            (S0) edge[bend2,above] node{$\underline{start\_M_{2}}$} (S1)
            (S1) edge[bend2,below,uncontrollable] node{$end\_M_{2}$} (S0)
            ;
    \end{tikzpicture}

\begin{tikzpicture}[auto,->,node distance=2.5cm,align=center,font=\smaller\it, uncontrollable/.style={densely dashed}]
        \node[state,initial left,accepting,initial text = {$R$}, minimum size=1cm] (S0) {\it 0};
        \node[state,right of=S0, minimum size=1cm] (S1) {\it 1};
        \node[state,right of=S1, minimum size=1cm] (S2) {\it 2};

        \path[->,shorten >=1pt]
            (S0) edge[bend2,above,uncontrollable] node{$end\_M_{1}$} (S1)
            (S1) edge[bend2,below] node{$\underline{start\_M_{2}}$} (S0)
            (S1) edge[uncontrollable] node{$end\_M_{1}$} (S2)
            (S0) edge  [loop above] node {$start\_M_{1}$} ()
            (S1) edge  [loop above] node {$start\_M_{1}$} ()
            (S0) edge  [uncontrollable,loop below] node {$end\_M_{2}$} ()
            (S1) edge  [uncontrollable,loop below] node {$end\_M_{2}$} ()
            ;
    \end{tikzpicture}
\caption{Automata for the machines and the specification in the small manufacturing line.}\label{plant}
\end{center}
\end{figure}
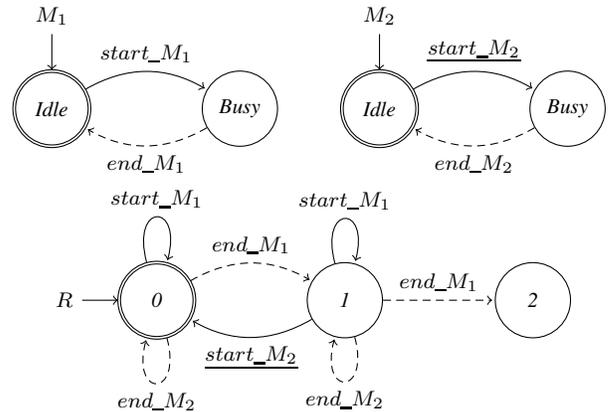

\begin{remark}
The self-loops in the specification automaton are not necessary. We have added them in this case, because as a consequence the number of states in the synchronous product is smaller and that facilitates graphical representation of that synchronous product.
\end{remark}

Fig.~\ref{supervisors} presents a number of relevant automata:
\begin{itemize}
    \item the synchronous product of the three automata from Fig.~\ref{plant} by considering all states and transitions regardless their color. This automaton is the input for the synthesis procedure.
    \item the traditional supervisor (no forcing at all), is obtained by considering only the black colored states and transitions. The states $BI1$ and $BB1$ have been made unreachable by disabling the incoming $start\_M_1$ events. 
    \item the forcing supervisor in the case that only event $start\_M_2$ is forcible is obtained by considering the black and blue states and transitions. The red and cyan states and transitions are omitted by synthesis. Note that this is the same result as predicted in \cite[Section 3.8]{bookcai}.
    \item the forcing supervisor in the case that also event $end\_M_2$ is forcible is given by all but the red states and transitions.
\end{itemize}

Both the synchronous product and the traditional supervisor have been obtained by applying the Compositional Interchange Format (CIF) \cite{van2014cif,ESCET2023}. CIF is part of the Eclipse Supervisory Control Engineering Toolkit (ESCET\texttrademark) project.\footnote{See http://eclipse.org/escet. `Eclipse', `Eclipse ESCET' and `ESCET' are trademarks of Eclipse Foundation, Inc.} Both forcing supervisors have been computed manually following Algorithm \ref{alg:scs_fc}.

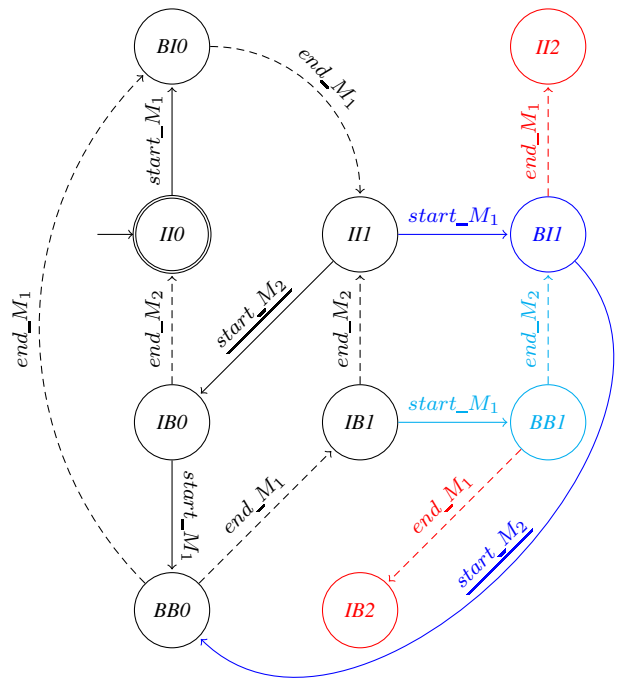
\begin{figure}[htbp]
\begin{tikzpicture}[auto,->,node distance=2.5cm,align=center,font=\smaller\it, uncontrollable/.style={densely dashed}]
        \node[state,initial left,accepting,initial text = {}, minimum size=1cm] (II0) {\it II0};
        \node[state,above of=II0, minimum size=1cm] (BI0) {\it BI0};
        \node[state,below of=II0, minimum size=1cm] (IB0) {\it IB0};
        \node[state,right of=II0, minimum size=1cm] (II1) {\it II1};
        \node[state,below of=IB0, minimum size=1cm] (BB0) {\it BB0};
        \node[state,right of=IB0, minimum size=1cm] (IB1) {\it IB1};
        \node[state,right of=II1, minimum size=1cm,blue] (BI1)
        {\it BI1};
        \node[state,below of=BI1, minimum size=1cm,cyan] (BB1)
        {\it BB1};
        \node[state,below of=IB1, minimum size=1cm,red] (IB2)
        {\it IB2};
        
        \node[state,above of=BI1, minimum size=1cm,red] (II2)
        {\it II2};
        
        \path[->,shorten >=1pt]
            (II0) edge[sloped] node{$start\_M_{1}$} (BI0)
            (II1) edge[sloped,blue] node{$start\_M_{1}$} (BI1)
            (IB0) edge[sloped] node{$start\_M_{1}$} (BB0)
            (IB1) edge[sloped,cyan] node{$start\_M_{1}$} (BB1)
            (II1) edge[sloped] node{$\underline{start\_M_{2}}$} (IB0)
            (BI1) edge[sloped,out=-45,in=-45,distance=3cm,blue] node{$\underline{start\_M_{2}}$} (BB0) 
            (BI0) edge[sloped,uncontrollable,out=0,in=90] node{$end\_M_{1}$} (II1)
            (BB0) edge[sloped,uncontrollable,sloped] node{$end\_M_{1}$} (IB1)
            (BI1) edge[sloped,uncontrollable,red] node{$end\_M_{1}$} (II2)
            (BB1) edge[sloped,uncontrollable,red] node{$end\_M_{1}$} (IB2)
            (IB0) edge[sloped,uncontrollable] node{$end\_M_{2}$} (II0)
            (IB1) edge[sloped,uncontrollable] node{$end\_M_{2}$} (II1)
            (BB1) edge[sloped,uncontrollable,cyan] node{$end\_M_{2}$} (BI1)
            (BB0) edge[sloped,out=135,in=225,uncontrollable] node{$end\_M_{1}$} (BI0) 
            ;
    \end{tikzpicture}
    \caption{Automaton representing the synchronous product and, using colors as described in the text, three different supervisors.}\label{supervisors}
\end{figure}

This example suggests, and our intuitions support this observation, that using additional forcible events generally results in more permissive supervisors.

\section{Case study - small factory}
\label{section:smallfactory}

In this section, both the traditional supervisor and forcing supervisor (for all controllable events) are provided for the small factory as discussed in \cite{bookcai}. The plant consists of two machines, namely $\mathit{M_1}$ and $\mathit{M_2}$ (see Fig.~\ref{fig:plantsmallfactory} for the automata representing the plant components). These machines can start and end processing (by means of controllable events $\mathit{start}\_M_{i}$, for $i=1,2$), but can also break down (uncontrollably with events $\mathit{break}\_M_{i}$, for $i=1,2$) when in the working state ($W$). In case of a breakdown, a repair can be achieved (controllable events $\mathit{repair}\_M_{i}$, for $i = 1,2$) upon which the system starts again in the idle state ($I$). The supervisors developed are supposed to achieve the requirements given by means of the controllable automata specifications\footnote{A controllable automaton is one whose marked language is controllable w.r.t.\ the plant.} given in Fig.~\ref{fig:requirementssmallfactory}\footnote{The models for the plant components and requirements are based on those provided in \cite{bookcai} with easier to interpret names for events and where the automata specifications are made controllable (where needed).}. The first requirement captures that the two machines are placed in line with a single-space buffer in between. Note that this buffer is not modeled explicitly. The second requirement states that in case both machines break down, priority has to be given to repairing machine $M_2$.

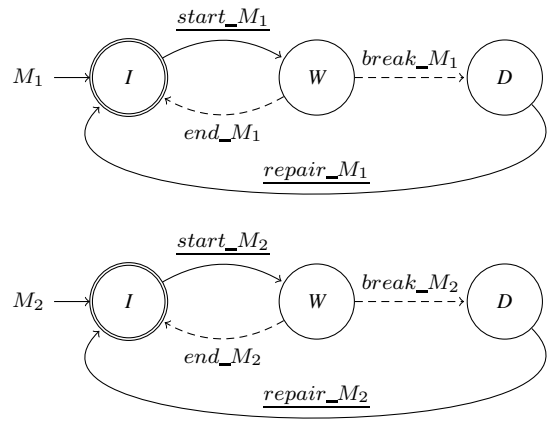
\begin{figure}
    \centering
    \begin{tikzpicture}[auto,->,node distance=2.5cm,align=center,font=\smaller\it, uncontrollable/.style={densely dashed}]
        \node[state,initial left,accepting,initial text = {$M_1$}, minimum size=1cm] (I) {\it I};
        \node[state,right of=I, minimum size=1cm] (W) {\it W};
        \node[state,right of=W, minimum size=1cm] (D) {\it D};

        \path[->,shorten >=1pt]
            (I) edge[bend2,above] node{\underline{$start\_M_{1}$}} (W)
            (W) edge[bend2,below,uncontrollable] node{$end\_M_{1}$} (I)
            (W) edge[uncontrollable] node{$break\_M_{1}$} (D)
            (D) edge  [out=-45,in=225,above] node {\underline{$repair\_M_{1}$}} (I)
            ;
    \end{tikzpicture}
        \begin{tikzpicture}[auto,->,node distance=2.5cm,align=center,font=\smaller\it, uncontrollable/.style={densely dashed}]
        \node[state,initial left,accepting,initial text = {$M_2$}, minimum size=1cm] (I) {\it I};
        \node[state,right of=I, minimum size=1cm] (W) {\it W};
        \node[state,right of=W, minimum size=1cm] (D) {\it D};

        \path[->,shorten >=1pt]
            (I) edge[bend2,above] node{\underline{$start\_M_{2}$}} (W)
            (W) edge[bend2,below,uncontrollable] node{$end\_M_{2}$} (I)
            (W) edge[uncontrollable] node{$break\_M_{2}$} (D)
            (D) edge  [out=-45,in=225,above] node {\underline{$repair\_M_{2}$}} (I)
            ;
    \end{tikzpicture}

    \caption{Automata representing the machines $M_1$ and $M_2$ in the small factory.}
    \label{fig:plantsmallfactory}
\end{figure}

\begin{figure}
    \centering
        \begin{tikzpicture}[auto,->,node distance=2.5cm,align=center,font=\smaller\it, uncontrollable/.style={densely dashed}]
        \node[state,initial left,accepting,initial text = {$R_1$}] (I) {};
        \node[state,right of=I] (W) {};
        \node[state,right of=W] (D) {};

        \path[->,shorten >=1pt]
            (I) edge[uncontrollable,bend2,above] node{$end\_M_{1}$} (W)
            (W) edge[bend2,below] node{\underline{$start\_M_{2}$}} (I)
            (W) edge[uncontrollable] node{$end\_M_{1}$} (D)
            ;
    \end{tikzpicture}
 \begin{tikzpicture}[auto,->,node distance=2.5cm,align=center,font=\smaller\it, uncontrollable/.style={densely dashed}]
        \node[state,initial left,accepting,initial text = {$R_2$}] (N) {};
        \node[state,right of=I] (P) {};

        \path[->,shorten >=1pt]
            (N) edge[loop below] node{\underline{$repair\_M_{1}$}} ()
            (N) edge[bend2,above,uncontrollable] node{$break\_M_{2}$} (P)
            (P) edge[bend2,below] node{\underline{$repair\_M_{2}$}} (N)
            ;
    \end{tikzpicture}
    \caption{Controllable specification automata for the small factory.}
    \label{fig:requirementssmallfactory}
\end{figure}
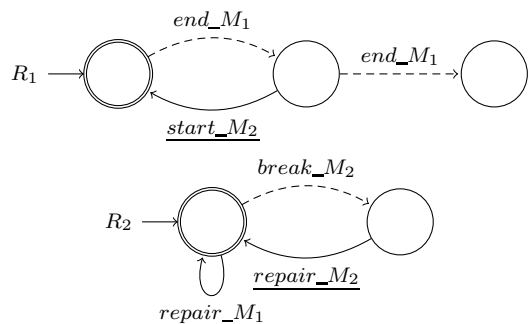

Application of supervisory control synthesis Algorithm~1 to this system (consisting of plant components and controllable requirements) where all (and only) the controllable events are considered forcible) results in a supervisor that can be represented by the automaton in Fig.~\ref{fig:forcingsupervisorsmallfactory}. 

\begin{figure*}[htbp]
    \centering
    \begin{tikzpicture}[auto,->,node distance=4cm,align=center,font=\smaller\it, uncontrollable/.style={densely dashed}]
        \node[state,initial below,accepting,initial text = {}] (s1) {};

        \node[state,above of=s1] (s5) {};  
        \node[state,above of=s5] (s4) {};
        \node[state,left of=s4,fill=green] (s6) {};
        
        \node[state,right of=s5] (s10) {};
        \node[state,left of=s5] (s12) {};

        \node[state,left of=s12] (s16) {}; 
        \node[state,left of=s16] (s3) {};
        \node[state,above of=s4] (s17) {};
        \node[state,right of=s17] (s14) {};
        
        \node[state,left of=s17,fill=green] (s19) {};
        \node[state,above of=s19] (s2) {};
        \node[state,left of=s19] (s7) {};
        \node[state,above of=s2] (s13) {};

        \path[->,shorten >=1pt]
            (s1) edge[bend2,below] node[sloped]{\hspace*{2cm}\underline{$start\_M_{1}$}} (s2)
            (s2) edge[out=180,in=90,above,uncontrollable] node[sloped]{$break\_M_{1}$} (s3)
            (s2) edge[out=-67,in=135,uncontrollable] node[sloped]{\hspace*{2cm}$end\_M_{1}$} (s4)
            (s3) edge[out=270,in=180,below] node{\underline{$repair\_M_{1}$}} (s1)
            (s4) edge[] node[sloped]{\underline{$start\_M_{2}$}} (s5)
            (s4) edge[color=red] node{\underline{$start\_M_{1}$}} (s6)
            (s5) edge[above,uncontrollable] node{$break\_M_{2}$} (s10)
            (s5) edge[uncontrollable] node[sloped]{$end\_M_{2}$} (s1)
            (s5) edge[out=135,in=-75,below] node[sloped]{\underline{$start\_M_{1}$}\hspace*{2cm}} (s7)
            (s6) edge[color=red] node[sloped]{\underline{$start\_M_{2}$}} (s7)
            (s7) edge[bend2,above,uncontrollable] node[sloped]{$break\_M_{2}$} (s13)
            (s7) edge[uncontrollable] node[sloped]{$end\_M_{2}$} (s2)
            (s7) edge[out=245,in=135,above,uncontrollable] node[sloped]{$break\_M_{1}$} (s12)
            (s7) edge[bend2,uncontrollable] node{\hspace*{2cm}$end\_M_{1}$} (s14)
            (s10) edge[below] node[sloped]{\underline{$repair\_M_{2}$}} (s1)
            (s10) edge[out=67,in=0] node[sloped]{\underline{$start\_M_{1}$}} (s13)
            (s12) edge[above,uncontrollable] node{$break\_M_{2}$} (s16)
            (s12) edge[bend2,above,uncontrollable] node{$end\_M_{2}$} (s3)
            (s12) edge[below] node{\underline{$repair\_M_{1}$}} (s5)
            (s13) edge[out=180,in=100,above,uncontrollable] node[sloped]{\hspace*{2cm}$break\_M_{1}$} (s16)
            (s13) edge[out=-45,in=90,uncontrollable] node[sloped]{$end\_M_{1}$} (s17)
            (s13) edge[] node[sloped]{\underline{$repair\_M_{2}$}} (s2)
            (s14) edge[below,uncontrollable] node{$break\_M_{2}$} (s17)
            (s14) edge[below,uncontrollable] node[sloped]{$end\_M_{2}$} (s4)
            (s16) edge[] node{\underline{$repair\_M_{2}$}} (s3)
            (s17) edge[] node[sloped]{\underline{$repair\_M_{2}$}} (s4)
            (s17) edge[above,color=red] node{\hspace*{1cm}\underline{$start\_M_{1}$}} (s19)
            (s19) edge[above,color=red] node[sloped]{\underline{$repair\_M_{2}$}} (s6)
            ;
    \end{tikzpicture}
    \caption{Forcing supervisor for the small factory.}
    \label{fig:forcingsupervisorsmallfactory}
\end{figure*}
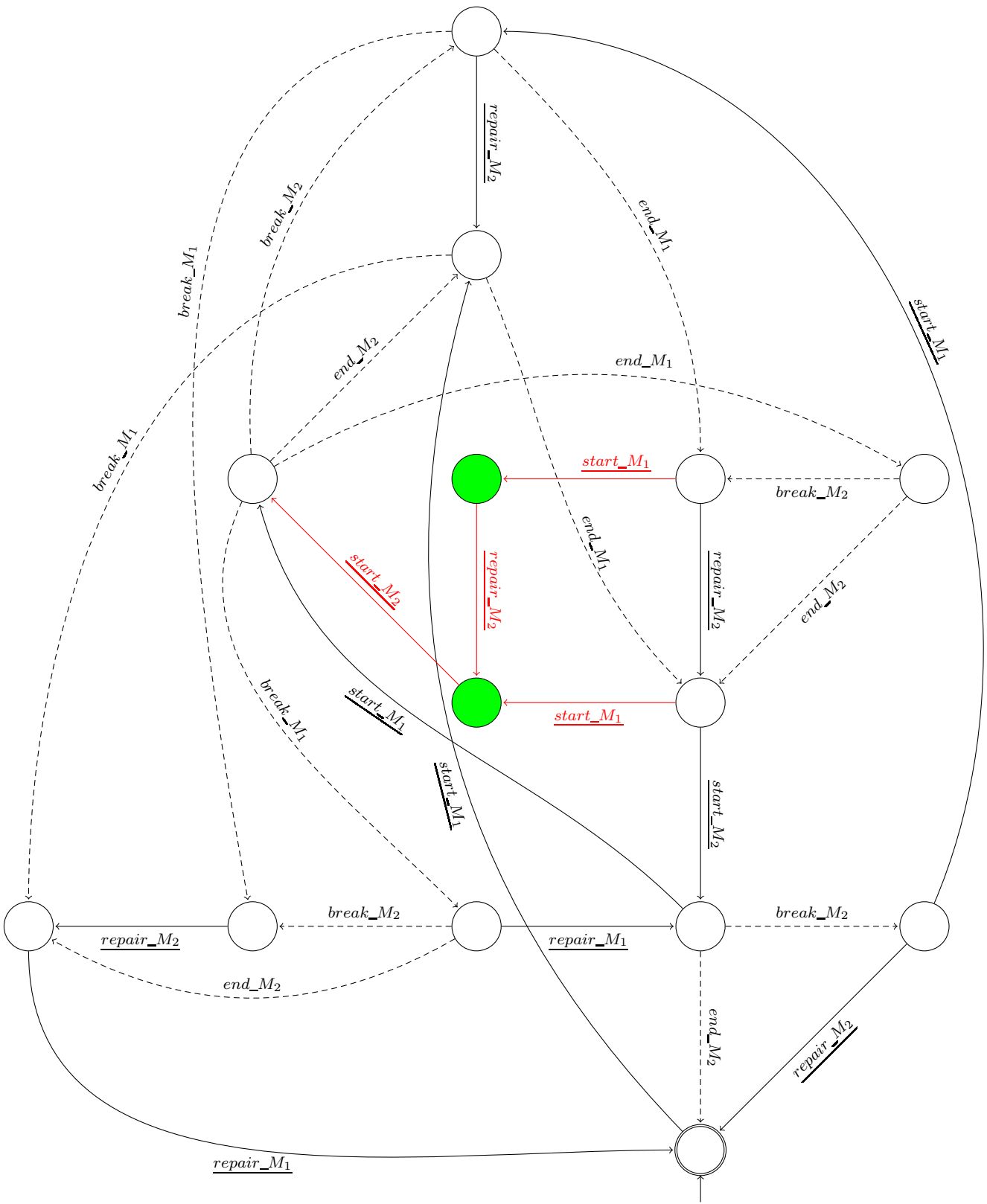

In the figure, the green states are those where forcing of an event is used to prevent reaching a state that is unsafe via an uncontrollable event. In the conventional supervisor those states are made unreachable by preventing their incoming controllable events (the red transitions labelled by the event $\mathit{start}\_M_{1}$). So in the nonforcing supervisor, the green state and the red transitions are absent.
Thus the forcing supervisor indeed allows behaviors that are prohibited by the nonforcing supervisor.

As can be seen from the figure, the forcing is achieved by using the forcible evens $\mathit{start}\_M_2$ and $\mathit{repair}\_M_2$. Therefore, in case the controllable events involving machine $M_1$ are not forcible, the same forcing supervisor results.

\section{Conclusions}
\label{section:conclusions}

In this paper, the setting of supervisory control theory has been enriched with the possibility for a supervisor to force events in order to preempt uncontrollable events from leading the plant into undesired states. This has required an adaptation of the notion of controllability to forcible-controllability. This notion and its properties have been studied in detail and a supervisory control problem for obtaining a maximally permissive, foricbly-controllable, nonblocking supervisory controller has been posed and solved. The approach has been illustrated with two case studies from literature.

Our future work is twofold. In one direction, we aim to extend this framework with forcing to tackle other fundamental problems in supervisory control; these include partial observation \cite{LinWon88}, timed systems \cite{brandin1994supervisory}, and supervisor localization \cite{bookcai2}. In the other direction, with the forcing mechanism which is the principal control mechanism used in continuous-time control theory, we are interested in establishing connections as well as contrasts between the two types of control theory; special attention will be given to several basic control-theoretic ideas including reachability, stability, and control-barriers \cite{amestabuada}.

\ifCLASSOPTIONcaptionsoff
  \newpage
\fi

\bibliographystyle{IEEEtran}
\bibliography{IEEEabrv,references}

\begin{thebibliography}{10}
\providecommand{\url}[1]{#1}
\csname url@samestyle\endcsname
\providecommand{\newblock}{\relax}
\providecommand{\bibinfo}[2]{#2}
\providecommand{\BIBentrySTDinterwordspacing}{\spaceskip=0pt\relax}
\providecommand{\BIBentryALTinterwordstretchfactor}{4}
\providecommand{\BIBentryALTinterwordspacing}{\spaceskip=\fontdimen2\font plus
\BIBentryALTinterwordstretchfactor\fontdimen3\font minus
  \fontdimen4\font\relax}
\providecommand{\BIBforeignlanguage}[2]{{%
\expandafter\ifx\csname l@#1\endcsname\relax
\typeout{** WARNING: IEEEtran.bst: No hyphenation pattern has been}%
\typeout{** loaded for the language `#1'. Using the pattern for}%
\typeout{** the default language instead.}%
\else
\language=\csname l@#1\endcsname
\fi
#2}}
\providecommand{\BIBdecl}{\relax}
\BIBdecl

\bibitem{bookcai}
W.~Wonham and K.~Cai, \emph{Supervisory Control of Discrete-Event
  Systems}.\hskip 1em plus 0.5em minus 0.4em\relax Springer, 2019.

\bibitem{cassandras2009introduction}
C.~G. Cassandras and S.~Lafortune, \emph{Introduction to discrete event
  systems}.\hskip 1em plus 0.5em minus 0.4em\relax Springer Science \& Business
  Media, 2009.

\bibitem{malik2002generating}
P.~Malik, ``Generating controllers from discrete-event models,'' in
  \emph{Proceedings of the Summer School in Modelling and Verification of
  Parallel processes}, 2002, pp. 337--242.

\bibitem{Reijnen19}
F.~Reijnen, A.~Hofkamp, J.~van~de Mortel-Fronczak, M.~Reniers, and J.~Rooda,
  ``Finite response and confluence of state-based supervisory controllers,'' in
  \emph{2019 IEEE 15th International Conference on Automation Science and
  Engineering (CASE)}, 2019, pp. 509--516.

\bibitem{Reijnen2022}
F.~F.~H. Reijnen, T.~R. Erens, J.~M. van~de Mortel-Fronczak, and J.~E. Rooda,
  ``Supervisory controller synthesis and implementation for safety plcs,''
  \emph{Discrete Event Dynamic Systems}, vol.~32, p. 115–141, 2022).

\bibitem{Golaszewski1987}
C.~H. Golaszewski and P.~J. Ramadge, ``Control of discrete event processes with
  forced events,'' \emph{26th IEEE Conference on Decision and Control},
  vol.~26, pp. 247--251, 1987.

\bibitem{brandin1994supervisory}
B.~A. Brandin and W.~M. Wonham, ``Supervisory control of timed discrete-event
  systems,'' \emph{IEEE Transactions on Automatic Control}, vol.~39, no.~2, pp.
  329--342, 1994.

\bibitem{ZhaCaiWon13}
R.~Zhang, K.~Cai, Y.~Gan, Z.~Wang, and W.~Wonham, ``Supervision localization of
  timed discrete-event systems,'' \emph{Automatica}, vol.~49, no.~9, pp.
  2786--2794, 2013.

\bibitem{TakaiUshio06}
S.~Takai and T.~Ushio, ``A new class of supervisors for timed discrete event
  systems,'' \emph{Discrete Event Dynamic Systems}, vol.~16, no.~2, pp.
  257--278, 2006.

\bibitem{rashidinejad2018}
A.~Rashidinejad, M.~Reniers, and L.~Feng, ``Supervisory control of timed
  discrete-event systems subject to communication delays and non-fifo
  observations,'' \emph{IFAC-PapersOnLine}, vol.~51, no.~7, pp. 456--463, 2018.

\bibitem{Rashidinejad2020a}
A.~Rashidinejad, P.~van~der Graaf, and M.~Reniers, ``Nonblocking supervisory
  control synthesis of timed automata using abstractions and forcible events,''
  in \emph{2020 16th International Conference on Control, Automation, Robotics
  and Vision (ICARCV)}, 2020, pp. 1--8.

\bibitem{Rashidinejad2020b}
\BIBentryALTinterwordspacing
A.~Rashidinejad, P.~{van der Graaf}, M.~Reniers, and M.~Fabian, ``Non-blocking
  supervisory control of timed automata using forcible events,''
  \emph{IFAC-PapersOnLine}, vol.~53, no.~4, pp. 356--362, 2020, 15th IFAC
  Workshop on Discrete Event Systems WODES 2020 — Rio de Janeiro, Brazil,
  11-13 November 2020. [Online]. Available:
  \url{https://www.sciencedirect.com/science/article/pii/S2405896321000756}
\BIBentrySTDinterwordspacing

\bibitem{Ushio2005}
\BIBentryALTinterwordspacing
T.~Ushio and S.~Takai, ``Control-invariance of hybrid systems with forcible
  events,'' \emph{Automatica}, vol.~41, no.~4, pp. 669--675, 2005. [Online].
  Available:
  \url{https://www.sciencedirect.com/science/article/pii/S0005109804003097}
\BIBentrySTDinterwordspacing

\bibitem{Huang2008}
J.~Huang and R.~Kumar, ``Directed control of discrete event systems for safety
  and nonblocking,'' \emph{IEEE Transactions on Automation Science and
  Engineering}, vol.~5, no.~4, pp. 620--629, 2008.

\bibitem{Balemi93}
S.~Balemi, G.~Hoffmann, P.~Gyugyi, H.~Wong-Toi, and G.~Franklin, ``Supervisory
  control of a rapid thermal multiprocessor,'' \emph{IEEE Transactions on
  Automatic Control}, vol.~38, no.~7, pp. 1040--1059, 1993.

\bibitem{RW}
P.~J. Ramadge and W.~M. Wonham, ``Supervisory control of a class of discrete
  event processes,'' \emph{SIAM Journal on Control and Optimization}, vol.~25,
  no.~1, pp. 206--230, 1987.

\bibitem{WR}
W.~M. Wonham and P.~J. Ramadge, ``On the supremal controllable sublanguage of a
  given language,'' \emph{SIAM Journal on Control and Optimization}, vol.~25,
  no.~3, pp. 637--659, 1987.

\bibitem{flordal2007compositional}
H.~Flordal, R.~Malik, M.~Fabian, and K.~{\AA}kesson, ``Compositional synthesis
  of maximally permissive supervisors using supervision equivalence,''
  \emph{Discrete Event Dynamic Systems}, vol.~17, no.~4, pp. 475--504, 2007.

\bibitem{ouedraogo2011nonblocking}
L.~Ouedraogo, R.~Kumar, R.~Malik, and K.~Akesson, ``Nonblocking and safe
  control of discrete-event systems modeled as extended finite automata,''
  \emph{IEEE Transactions on Automation Science and Engineering}, vol.~8,
  no.~3, pp. 560--569, 2011.

\bibitem{Thuijsman2020}
S.~Thuijsman and M.~Reniers, ``Transformational supervisor synthesis for
  evolving systems,'' \emph{IFAC-PapersOnLine}, vol.~53, no.~4, pp. 309--316,
  2020.

\bibitem{ramadge1989control}
P.~J. Ramadge and W.~M. Wonham, ``The control of discrete event systems,''
  \emph{Proceedings of the IEEE}, vol.~77, no.~1, pp. 81--98, 1989.

\bibitem{van2014cif}
D.~van Beek, W.~Fokkink, D.~Hendriks, A.~Hofkamp, J.~Markovski, J.~van~de
  Mortel-Fronczak, and M.~Reniers, ``{CIF} 3: Model-based engineering of
  supervisory controllers,'' in \emph{Tools and Algorithms for the Construction
  and Analysis of Systems}.\hskip 1em plus 0.5em minus 0.4em\relax Springer,
  2014, pp. 575--580.

\bibitem{ESCET2023}
W.~Fokkink, M.~Goorden, D.~Hendriks, B.~van Beek, A.~Hofkamp, F.~Reijnen,
  P.~Etman, L.~Moormann, A.~van~de Mortel-Fronczak, M.~Reniers, K.~Rooda,
  B.~van~der Sanden, R.~Schiffelers, S.~Thuijsman, J.~Verbakel, and H.~Vogel,
  ``Eclipse {ESCET}\texttrademark: The {E}clipse {S}upervisory {C}ontrol
  {E}ngineering {T}oolkit,'' in \emph{TACAS 2023}, 2023).

\bibitem{LinWon88}
F.~Lin and W.~Wonham, ``On observability of discrete-event systems,''
  \emph{Information Science}, vol.~44, no.~3, pp. 173--198, 1988.

\bibitem{bookcai2}
K.~Cai and W.~Wonham, \emph{Supervisor Localization: A Top-Down Approach to
  Distributed Control of Discrete-Event Systems}.\hskip 1em plus 0.5em minus
  0.4em\relax Springer, 2016.

\bibitem{amestabuada}
A.~Ames, S.~Coogan, M.~Egerstedt, G.~Notomista, K.~Sreenath, and P.~Tabuada,
  ``Control barrier functions: theory and applications,'' in \emph{2019 18th
  European Control Conference}, 2019, pp. 3420--3431.

\end{thebibliography}

\end{document}